\newtheorem{corollary}{\textbf{Corollary}}
\newtheorem{lemma}{\textbf{Lemma}}
\newtheorem{theorem}{\textbf{Theorem}}
\newtheorem{definition}{\textbf{Definition}}
\newtheorem{example}{\textbf{Example}}
\newtheorem{algorithm}{\textbf{Algorithm}}
\newtheorem{experiment}{\textbf{Experiment}}
\title{\LARGE \bf
Generalized Network Tomography
}
\author{\authorblockN{Gugan Thoppe}
\authorblockA{School of Technology and Computer Science\\
Tata Institute of Fundamental Research, Mumbai, INDIA\\
Email: gugan@tcs.tifr.res.in}}
\begin{document}

\maketitle
\thispagestyle{empty}
\pagestyle{empty}

\begin{abstract}
For successful estimation, the usual network tomography algorithms crucially require i) end-to-end data generated using multicast probe packets, real or emulated, and ii) the network to be a tree rooted at a single sender with destinations at leaves. These requirements, consequently, limit their scope of application. In this paper, we address successfully a general problem, henceforth called generalized network tomography, wherein the objective is to estimate the link performance parameters for networks with arbitrary topologies using only end-to-end measurements of pure unicast probe packets. Mathematically, given a binary matrix $A,$ we propose a novel algorithm to uniquely estimate the distribution of $X,$ a vector of independent non-negative random variables, using only IID samples of the components of the random vector $Y = AX.$ This algorithm, in fact, does not even require any prior knowledge of the unknown distributions. The idea is to approximate the distribution of each component of $X$ using linear combinations of known exponential bases and estimate the unknown weights. These weights are obtained by solving a set of polynomial systems based on the moment generating function of the components of  $Y.$ For unique identifiability, it is only required that every pair of columns of the matrix $A$ be linearly independent, a property that holds true for the routing matrices of all multicast tree networks. Matlab based simulations have been included to illustrate the potential of the proposed scheme.
\end{abstract}

\section{Introduction}
\label{sec:intro}
Network tomography, first proposed in \cite{Vardi96}, is the science of inferring spatially localized network behavior using only metrics that are practically feasible to measure. The problems considered in network tomography can be classified into two broad strands: i) traffic demand tomography---determination of source destination traffic volumes via measurements of link volumes and ii) network delay tomography---link parameter estimation based on end to end path level measurements. However, central to both these areas, is the problem of inferring the statistics of $X,$ a vector of independent non-negative random variables, given the measurement model $Y = AX.$ The challenge in these problems stems from the fact that $A$ is usually an ill posed matrix and hence non-invertible. For excellent tutorials and surveys on the state of art, see \cite{Adams00,Coates02b,Castro04} and \cite{Lawrence07}.

For sake of definiteness, we consider here the problem of network delay tomography. The proposed method, however, is also applicable to traffic demand tomography. Under delay tomography, the major problems studied include estimation of bottleneck link bandwidths, e.g. \cite{Liu07,Dey11}, link loss rates, e.g. \cite{Caceres99}, link delays, e.g., \cite{Coates01,LoPresti02,Shih03,Tsang03}, etc. For successful estimation, the proposed solutions to these problems crucially require i) end-to-end data generated using multicast probe packets, real or emulated, and ii) the network to be a tree rooted at a single sender with destinations at leaves. These algorithms mainly exploit the correlations in the path measurements, i.e., the packets have the same experience on shared links. Because of this, any divergence in either of the above requirements results in performance degradation. Consequently, there is a need to develop tomography algorithms for networks with arbitrary topologies using only pure unicast probe packet measurements. Mathematically, this is same as addressing the generalized network tomography (GNT) problem, wherein, given the binary matrix $A,$ the objective is to estimate the statistics of $X,$ a vector of independent non-negative random variables, using only IID samples of the components of the random vector $Y = AX.$ 

In this paper we propose a novel method, henceforth called the distribution tomography (DT) scheme, for the framework of GNT to accurately estimate the distribution of $X$ even when no prior knowledge about the same is available. We rely on the fact that the class of \textit{generalized hyperexponential} (GH) distributions is dense in the set of non-negative distributions (see \cite{Botha86}). Using this, the idea then is to approximate the distribution of each component of $X$ using linear combinations of known exponential bases and estimate the unknown weights. These weights are obtained by solving a set of polynomial systems based on the moment generating function of the components of  $Y.$ For unique identifiability, it is only required that every pair of columns of the matrix $A$ be linearly independent, a property that holds true for the routing matrices of all multicast tree networks.

The rest of the paper is organized as follows. In the next section, we develop the notation and formally describe the problem. Section~\ref{sec:approxDist} recaps the theory of approximating non-negative distributions using linear combinations of exponentials. In Sections~\ref{sec:GenNT} and \ref{sec:universality}, we develop our proposed method and demonstrate its universal applicability. We give numerical examples in Section~\ref{sec:ExptResults} and end with a short discussion in Section~\ref{sec:ConclFutW}.

\section{Model and Problem Description}
\label{sec:model}
Any cumulative distribution function (CDF) that we work with is always assumed to be continuous with support $(0, \infty).$ The moment generating function (MGF) of the random variable $X$ will be $M_X(t) = \mathbb{E}(\exp(-tX)).$ For $n \in \mathbb{N},$ we use $[n]$ and $S_n$ to represent respectively the set $\{1, \ldots, n\}$ and its permutation group. We use the notation $\mathbb{R}, \mathbb{R}_{+}$ and $\mathbb{R}_{++}$ to denote respectively the set of real numbers, non-negative real numbers and strictly positive real numbers. In the same spirit, for integers we use $\mathbb{Z}, \mathbb{Z}_{+}$ and $\mathbb{Z}_{++}.$ All vectors are column vectors and their lengths refer to the usual Euclidean norm. For $\delta > 0,$ $B(\mathbf{v}; \delta)$ represents the open $\delta-$ball around the vector $\mathbf{v}.$ To denote the derivative of the map $f$ with respect to $\mathbf{x},$ we use $\dot{f}(\mathbf{x}).$ Lastly, all empty sums and empty products equal $0$ and $1$ respectively.

Let $X_1, \ldots, X_N$ denote the independent non-negative random variables whose distribution we wish to estimate. We assume that each $X_j$ has a GH distribution of the form
\begin{equation}
  \begin{array}{cccc}
    F_j(u) & = & \sum_{k=1}^{d + 1} w_{jk} \left[ 1 - \exp\left( -\lambda_{k} u \right) \right],  & u \geq 0
  \end{array}
	\label{eq:Dist}
\end{equation}
where $\lambda_{k}>0,$ $\sum_{k=1}^{d+1} w_{jk} \lambda_{k} \exp(-\lambda_{k}u) \geq 0$ and $\sum_{k=1}^{d + 1}w_{jk} =  1.$ Further, we suppose that $\lambda_1, \ldots, \lambda_{d + 1}$ are distinct and explicitly known and that the weight vectors of distinct random variables differ at least in one component. Let $A \in \{0,1\}^{m \times N}$ denote an a priori known matrix which is $1-$identifiable in the following sense.
\begin{definition}
A matrix $A$ is $k-$identifiable if every set of $2k$  of its columns is linearly independent.
\end{definition}

Let $X \equiv (X_1, \ldots, X_N)$ and $Y = AX.$ For each $i \in [m],$ we  presume that we have access to a sequence of IID samples of the $i^{th}$ linear combination $Y_i.$ Our problem then is to estimate for each $X_j,$ its vector of weights $\mathbf{w}_j \equiv (w_{j1}, \ldots, w_{jd})$ and consequently its complete distribution $F_j,$ since $w_{j(d+1)} = 1 - \sum_{k = 1}^{d} w_{jk}.$

Before developing the estimation procedure, we begin by making a case for the distribution model of \eqref{eq:Dist}.

\section{Approximating Distribution Functions}
\label{sec:approxDist}
Consider the problem of simultaneously estimating the members of a finite family of arbitrary distributions, say $\mathcal{G}=\{G^1, \ldots, G^N\}.$ A useful strategy is to approximate each member by a GH distribution. The CDF of a GH random variable $X$ is given by
\begin{equation}
  \begin{array}{cc}
    F_{X}(u) = \sum_{k=1}^{d + 1} \alpha_k \left[ 1 - \exp\left( -\lambda_{k} u \right) \right],  & u \geq 0,
  \end{array}
  \label{eq:GHDist}
\end{equation}
where $\lambda_{k}>0,$ $\sum_{k=1}^{d + 1}\alpha_{k}\lambda_{k}\exp(-\lambda_{k}u)\geq0$ and $\sum_{k=1}^{d + 1}\alpha_{k}=1.$ Consequently, its MGF is given by
\begin{equation}
M_X(t)=\sum_{k=1}^{d + 1}\alpha_{k}\frac{\lambda_{k}}{\lambda_{k} + t}. \label{eq:GHMGF}
\end{equation}

In addition to the simple algebraic form of the above quantities, the other major reason to use the GH class is that, in the sense of weak topology, it is dense in the set of distributions (see \cite{Botha86}). In fact given a continuous CDF $F$ with MGF $M$, one can explicitly find a sequence of GH distributions, say $F_n,$ that converge uniformly to it. Furthermore, if $M_n$ is the MGF for $F_n,$ then for each $t \geq 0,$ $M_n(t)\rightarrow M(t).$ The idea is based on the following result.
\begin{theorem}
	\cite{Ou97}	\label{thm:Ou}
For $n,k \in \mathbb{N},$ let $X_{n,k}$ be a nonnegative GH random variable with mean $k/n,$ variance $\sigma_{n,k}^{2}$ and CDF $W_{n,k}.$ Suppose
  \begin{enumerate}
  \item the function $\nu:\mathbb{N}\rightarrow\mathbb{N}$ satisfies $\underset{n\rightarrow\infty}{lim}\nu(n)/n=\infty.$
  \item there exists $0<s<1$ such that $\underset{n\rightarrow\infty}{lim}n^{1+s} \sigma_{n,k}^{2}/k=0$ uniformly with respect to $k.$
  \end{enumerate}
  Then given any continuous non-negative distribution function $F,$ the following holds:
  \begin{enumerate}
  \item the function $F_{n}$ given by
    \begin{eqnarray*}
      F_{n}(u) & = & \sum_{k=1}^{\nu(n)}\left\{F(k/n)-F((k-1)/n)\right\} W_{n,k}(u)\\
      &  & +(1-F(\nu(n)/n))W_{n,\nu(n)+1}(u)
    \end{eqnarray*}
    is a GH distribution for every $n\in\mathbb{N}$ and
  \item $F_{n}$ converges uniformly to $F,$ i.e., 
  \begin{equation*}
    \underset{n \rightarrow \infty}{\lim} \underset{-\infty < u < \infty}{\sup} \left|F_{n}(u)-F(u)\right| =  0.
  \end{equation*}\,
  \end{enumerate}
\end{theorem}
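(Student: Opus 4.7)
The plan is to prove (1) by an elementary mixture argument and (2) by a Chebyshev-type comparison against a staircase approximant.

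For (1), I would observe that the coefficients $F(k/n)-F((k-1)/n)$ and $1-F(\nu(n)/n)$ are non-negative (since $F$ is non-decreasing and bounded by $1$) and telescope to $1$. Hence $F_n$ is a convex combination of the GH distributions $W_{n,k}$. Pooling the rates of the various $W_{n,k}$ into one common list and collecting coefficients realizes $F_n$ in the form \eqref{eq:GHDist}; its density is a convex combination of non-negative GH densities and is therefore non-negative. So $F_n$ is GH.

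For (2), I would introduce the staircase
\begin{equation*}
\tilde{F}_n(u)=\sum_{k=1}^{\nu(n)}\bigl[F(k/n)-F((k-1)/n)\bigr]\mathbf{1}_{\{u\ge k/n\}}+\bigl[1-F(\nu(n)/n)\bigr]\mathbf{1}_{\{u\ge(\nu(n)+1)/n\}}
\end{equation*}
and split the error as $|F_n-F|\le|F_n-\tilde{F}_n|+|\tilde{F}_n-F|$. The second piece is standard: on any compact $[0,M]$, uniform continuity of $F$ controls the mesh $1/n$; on the tail $u>\nu(n)/n$, the decay $F(u)\to 1$ combined with hypothesis (1), $\nu(n)/n\to\infty$, drives the error to zero uniformly. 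Thus $\sup_u|\tilde{F}_n(u)-F(u)|\to 0$.

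The crux, and the main obstacle, is to bound $\sup_u|F_n(u)-\tilde{F}_n(u)|$. Chebyshev's inequality applied to $X_{n,k}$ gives, for $u\ne k/n$,
\begin{equation*}
\bigl|W_{n,k}(u)-\mathbf{1}_{\{u\ge k/n\}}\bigr|\le\min\!\left(1,\,\frac{\sigma_{n,k}^{2}}{(u-k/n)^{2}}\right).
\end{equation*}
I would then split the weighted sum at $k\approx nu$ into a near window $|k-nu|\le n^{\beta}$ and a far tail. In the near window the Chebyshev bound is trivial, but the total weight collapses to $F(u+n^{\beta-1})-F(u-n^{\beta-1})$, which vanishes uniformly on compacts for any $\beta<1$ by uniform continuity of $F$. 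In the far tail, inserting the uniform bound $\sigma_{n,k}^{2}\le\varepsilon_n k/n^{1+s}$ from hypothesis (2) (with $\varepsilon_n\to 0$) and using the decay of $k/(k-nu)^{2}$ away from the peak dominates the far-tail sum by $\varepsilon_n n^{2-s-2\beta}$ plus strictly smaller terms. This vanishes whenever $\beta>1-s/2$, so the admissible range $\beta\in(1-s/2,1)$ is non-empty precisely because $s>0$. The delicate point, and the reason hypothesis (2) is stated with that specific scaling, is that the window width $\beta$ must simultaneously let the near-zone mass collapse (via continuity of $F$) and let the far-zone Chebyshev sum vanish; the interval $(1-s/2,1)$ is exactly what allows both. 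Combining the two bounds and invoking the previous paragraph yields the uniform convergence in (2).
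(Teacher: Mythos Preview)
The paper does not supply a proof of this theorem at all: it is quoted verbatim from \cite{Ou97} and used as a black box to justify the distribution model~\eqref{eq:Dist}. There is therefore no ``paper's own proof'' to compare your attempt against.

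That said, your sketch is a reasonable reconstruction of the standard Bernstein--Chebyshev argument behind such approximation results. Part~(1) is fine: finite convex combinations of GH distributions are GH. For part~(2), the staircase comparison and the near/far window split at scale $n^{\beta}$ with $\beta\in(1-s/2,1)$ is the expected mechanism, and you correctly identify that the existence of this interval is precisely what hypothesis~(2) with exponent $1+s$ buys. Two places where the write-up would need tightening before it is a proof rather than a plan: first, you invoke uniform continuity of $F$ ``on compacts,'' but the conclusion is uniform on all of $\mathbb{R}$; you should note explicitly that a continuous CDF is uniformly continuous on the whole line, so the near-window bound $F(u+n^{\beta-1})-F(u-n^{\beta-1})\to 0$ is in fact uniform in $u$. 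Second, your far-tail estimate ``$\varepsilon_n n^{2-s-2\beta}$ plus strictly smaller terms'' hides the dependence on $k$ in $\sigma_{n,k}^{2}\le\varepsilon_n k/n^{1+s}$; for $k$ much larger than $nu$ this factor is unbounded, and you need to use the decay of the $F$-increments (or an integral comparison) rather than just $\sum_k[F(k/n)-F((k-1)/n)]\le 1$ to close the bound uniformly in $u$. The last atom $(1-F(\nu(n)/n))W_{n,\nu(n)+1}$ also deserves a sentence, since its weight need not be small for fixed $n$ and its mean $(\nu(n)+1)/n\to\infty$.
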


Observe that for each $n,$ the exponential stage parameters of $F_n$ depend only on the choice of the random variables $\{X_{n,k}\!: 1\leq k\leq \nu(n) +1\}.$ Regarding $\mathcal{G},$ if we fix the random variables $X_{n,k}$ and let $M^i$ denote the MGF of $G^i,$  then this observation and the above result implies that for any given $\epsilon_1, \epsilon_2 >0$ and any finite set $\tau=\{t_1,\ldots ,t_k\} \subset \mathbb{R}_+,$ $\exists n \equiv n(\epsilon_1, \epsilon_2, \tau) \in \mathbb{N}$ such that for each $i\in[N],$ $G^i$ and its $n^{th}$ GH approximation, $F_n^i,$ are $\epsilon_1-$ close in the sup norm and for each $j \in [k],$ $|M^i(t_j)-M_n^i(t_j)|\leq \epsilon_2.$ Further, the exponential stage parameters are explicitly known and identical across the approximations $F_n^i,$ which now justifies our model of \eqref{eq:Dist}.

The problem of estimating the members of $\mathcal{G}$ can thus be reduced to determining the vector of weights that characterizes each approximation and hence each distribution.

\section{Distribution Tomography Scheme}
\label{sec:GenNT}

The outline for this section is as follows. For each $i,$ we use the IID samples of $Y_i$ to estimate its MGF and subsequently build a polynomial system, say $H_i(\mathbf{x}) = 0.$ We call this the elementary polynomial system (EPS). We then show that for each $i \in [m]$ and each $j \in p_i := \{j \in [N]: a_{ij} = 1\},$ a close approximation of the vector $\mathbf{w}_j$ is present in the solution set of $H_i(\mathbf{x}) = 0,$ denoted $V(H_i).$ To match the weight vectors to the corresponding random variables, we make use of the fact that $A$ is $1-$identifiable.

\subsection{Construction of Elementary Polynomial Systems}
\label{subsec:ConstPolySys}

Fix $i \in [m]$ and suppose that $|p_i| = N_i,$ i.e., $Y_i$ is a sum of $N_i$ random variables, which for convenience, we relabel as $X_1, \ldots, X_{N_i}$ in some order. Using \eqref{eq:Dist} and \eqref{eq:GHMGF}, observe that the MGF of $Y_i$ is well defined $\forall t \in \mathbb{R}_{++}$ and satisfies the relation
\begin{equation}
\label{eq:MGFRel}
M_{Y_{i}}(t) = \prod_{j = 1}^{N_i} \left\{ \sum_{k = 1}^{d + 1}w_{jk} \left( \frac{\lambda_{k}}{\lambda_{k}+t} \right) \right\}.
\end{equation}
On simplification, after substituting $w_{j(d+1)} = 1 - \sum_{k=1}^{d}w_{jk},$ we get
\begin{equation}
\label{eq:modMGFRel}
\mu_i(t) = \left\{ \prod_{j = 1}^{N_i} \left[ \sum_{k=1}^{d}w_{jk}\Lambda_k(t) + \lambda_{d + 1}\right] \right\},
\end{equation}
where $\Lambda_k(t) = (\lambda_{k} - \lambda_{d + 1})t/(\lambda_{k} + t)$ and $\mu_i(t) = M_{Y_{i}}(t) (\lambda_{d + 1} + t)^{N_i} .$

For now, let us assume that we know $M_{Y_i}(t)$ and hence $\mu_i(t)$ exactly for every valid $t.$ We will refer henceforth to this situation as the ideal case. Treating $t$ as a parameter, we can then use \eqref{eq:modMGFRel} to define a canonical polynomial
\begin{equation}
f(\mathbf{x};t) = \left\{ \prod_{j = 1}^{N_i} \left[ \sum_{k=1}^{d}x_{jk}\Lambda_k(t) + \lambda_{d + 1}\right] \right\} - \mu_i(t), \label{eq:tempGenPoly}
\end{equation}
where $\mathbf{x} \equiv (\mathbf{x}_1, \ldots, \mathbf{x}_{N_i})$  with $\mathbf{x}_j \equiv (x_{j1}, \ldots, x_{jd}).$ As this is a multivariate map in $d \cdot N_i$ variables, we can choose an arbitrary set $\tau = \{t_1, \ldots, t_{d \cdot N_i}\} \subset \mathbb{R}_{++}$ consisting of distinct numbers and define an intermediate square polynomial system
\begin{equation}
\label{eq:tempPolSys}
F_\tau(\mathbf{x}) \equiv (f_1(\mathbf{x}), \ldots, f_{d \cdot N_i}(\mathbf{x})) = 0,
\end{equation}
where $f_k(\mathbf{x}) \equiv f(\mathbf{x}; t_k).$

Since \eqref{eq:tempPolSys} depends on choice of $\tau,$ analyzing it directly is difficult. But observe that i) the expansion of each $f_n$ or equivalently \eqref{eq:tempGenPoly} results in rational coefficients in $t$ of the form $\Lambda^\mathbf{L} \equiv \Lambda_{1}^{L_{1}}(t)\cdots\Lambda_{d}^{L_{d}}(t)\lambda_{d+1}^{L_{d+1}},$ where $L_k \in \mathbb{Z}_{+}$ and $\sum_{k=1}^{d+1}L_k = N_{i},$ and ii) the monomials that constitute each polynomial are identical. This suggests that one may be able to get a simpler representation for \eqref{eq:tempPolSys}. We do so in the following three steps, where the first two focus on simplifying \eqref{eq:tempGenPoly}.

\noindent \emph{Step1-Gather terms with common coefficients}: Let
\[
\Delta_{d+1,N_{i}}:=\left\{ \mathbf{L}\equiv\left(L_{1},\ldots,L_{d+1}\right)\in\mathbb{Z}_{+}^{d+1}:\sum_{k=1}^{d+1}L_{k}=N_{i}\right\}.
\]
For a vector $\mathbf{b}\equiv(b_{1},\ldots,b_{N_{i}})\in[d+1]^{N_{1}},$ let its type be denoted by $\Theta(\mathbf{b})\equiv(\theta_{1}(\mathbf{b}),\ldots,\theta_{d+1}(\mathbf{b})),$ where $\theta_{k}(\mathbf{b})$ is the count of the element $k$ in $\mathbf{b}.$ For every $\mathbf{L}\in\Delta_{d+1,N_{i}},$ additionally define the set
\[
\mathcal{B}_{\mathbf{L}}=\left\{ \mathbf{b}\equiv(b_{1},\ldots,b_{N_{i}})\in[d+1]^{N_{i}}:\,\Theta(\mathbf{b})=L\right\}
\]
and the polynomial $g(\mathbf{x};\mathbf{L})=\sum_{\mathbf{b}\in\mathcal{B}_{L}}(\prod_{j=1,\, b_{j}\ne d+1}^{N_{i}}x_{jb{}_{j}}).$ Then collecting terms with common coefficients in \eqref{eq:tempGenPoly}, the above notations help us rewrite it as
\begin{equation}
f(\mathbf{x};t)=\sum_{\mathbf{L}\in\Delta_{d+1,N_{i}}}g(\mathbf{x};\mathbf{L})\Lambda^{\mathbf{L}}-\mu_{i}(t).\label{eq:afterColComTerms}
\end{equation}

\noindent \emph{Step2-Coefficient expansion and regrouping}: Using an idea similar to partial fraction expansion for rational functions in $t$, the goal here is to decompose each coefficient into simpler terms. For each $j,$ $k\in[d],$ let
\[
\beta_{jk}:=
\begin{cases}
\frac{\lambda_{j}\left(\lambda_{k}-\lambda_{d+1}\right)}{\left(\lambda_{j}-\lambda_{k}\right)} & \mbox{if }j\ne k\\
1 & j=k
\end{cases}.
\]
For each $\mathbf{L} \in \Delta_{d + 1, N_i},$ let $\mathcal{D}(\mathbf{L}):= \{k\in[d]: L_{k}>0\}.$ Further, if $\mathcal{D}(\mathbf{L}) \neq \emptyset$ then $\forall k \in \mathcal{D}(\mathbf{L})$ and $\forall q \in [L_k],$ let
\begin{eqnarray*}
\bar{\Delta}_{kq}(\mathbf{L}) & := & \{\mathbf{s}\equiv(s_{1},\ldots,s_{d})\in\mathbb{Z}_{+}^{d}: s_{r} = 0\\
& &\forall r\in\mathcal{D}(\mathbf{L})^{c} \cup \{k\}; \sum_{n=1}^{d}s_n = L_{k}-q\} \mbox{ and}
\end{eqnarray*}
\[
\gamma_{kq}(\mathbf{L}):=\prod_{r\in\mathcal{D}(\mathbf{L})}\beta_{kr}^{L_{r}}\left\{\sum_{\mathbf{s}\in\bar{\Delta}_{kq}(\mathbf{L})}\prod_{r\in\mathcal{D}(\mathbf{L})}\tbinom{L_{r}+s_{r}-1}{L_{r}-1}\beta_{rk}^{s_{r}}\right\},
\]
where $\tbinom{L_{r}+s_{r}-1}{L_{r}-1}=\frac{\left(L_{r}+s_{r}-1\right)!}{\left(L_{r}-1\right)!s_{r}!}.$ The desired expansion is now given in the following lemma.

\begin{lemma}
$\Lambda^{\mathbf{L}} = \sum_{k \in \mathcal{D}(\mathbf{L})}\sum_{q=1}^{L_{k}}\frac{\gamma_{kq}(\mathbf{L})}{\lambda_{d + 1}^{N_i - q - L_{d + 1}}}\Lambda_{k}^{q}(t)\lambda_{d+1}^{N_{i}-q}$ if $\mathcal{D}(\mathbf{L}) \neq \emptyset.$ Further, this expansion is unique and holds $\forall t.$
\end{lemma}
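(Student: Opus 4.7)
The plan is to recognize the identity as a partial-fraction expansion of the rational function $\prod_{k\in\mathcal{D}(\mathbf{L})}\Lambda_k^{L_k}(t)$ in an unusual basis. Pulling the common factor $\lambda_{d+1}^{L_{d+1}}$ out of $\Lambda^{\mathbf{L}}$ and clearing the powers of $\lambda_{d+1}$ on the right, the claim is equivalent to
\begin{equation*}
\prod_{k\in\mathcal{D}}\Lambda_k^{L_k}(t) \;=\; \sum_{k\in\mathcal{D}}\sum_{q=1}^{L_k}\gamma_{kq}(\mathbf{L})\,\Lambda_k^q(t).
\end{equation*}
The left-hand side is a rational function in $t$ whose only poles are at $t=-\lambda_k$ for $k\in\mathcal{D}$, with respective orders $L_k$, which is bounded at infinity and vanishes at $t=0$. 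For uniqueness I would first verify that the $\sum_{k\in\mathcal{D}}L_k$ functions $\{\Lambda_k^q:k\in\mathcal{D},\,1\le q\le L_k\}$ are linearly independent, by reading off the principal parts at each distinct pole $-\lambda_k$ and peeling from the top order downward. All of them vanish at $t=0$, and a dimension count matches the subspace of admissible rational functions vanishing at $0$, so they form a basis and any expansion of the above form is unique.

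The engine that produces the coefficients is the two-term algebraic identity
\begin{equation*}
\Lambda_r(t)\,\Lambda_k(t) \;=\; \beta_{rk}\Lambda_r(t)+\beta_{kr}\Lambda_k(t) \qquad (r\neq k),
\end{equation*}
which I would verify by a direct common-denominator computation from the definitions of $\Lambda_r$, $\Lambda_k$, and $\beta_{\cdot\cdot}$. Using this as a rewriting rule, the recursion $\Lambda_r^b\Lambda_k^a = \beta_{rk}\Lambda_r^{b-1}\Lambda_k^a + \beta_{kr}\Lambda_r^b\Lambda_k^{a-1}$ generates a two-dimensional lattice walk whose absorbing boundaries correspond to pure powers of $\Lambda_r$ and $\Lambda_k$; counting walks that terminate on the $\Lambda_k$ axis by a stars-and-bars argument produces the binomials $\binom{L_r+s_r-1}{L_r-1}$ together with the weights $\beta_{kr}^{L_r}$ and $\beta_{rk}^{s_r}$ that appear in $\gamma_{kq}(\mathbf{L})$, while the support $\bar{\Delta}_{kq}(\mathbf{L})$ records the constraint $\sum_{r\neq k} s_r = L_k - q$.

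The main obstacle is the bookkeeping when $|\mathcal{D}|\ge 3$: naive iterated pairwise reduction with $\Lambda_k$ can generate auxiliary cross terms $\Lambda_{r_1}^{a}\Lambda_{r_2}^{b}$ (with $r_1,r_2\neq k$) that themselves require reduction, and one must show all such contributions collapse into the single sum $\gamma_{kq}(\mathbf{L})$ independently of the order in which the identities are applied. The uniqueness established in the first paragraph rescues us: any valid reduction must yield the same coefficients, so it suffices to verify the formula by a single cleaner route. A residue-based computation is one such route---extract the principal part of $\prod_r\Lambda_r^{L_r}$ at $t=-\lambda_k$ via $\tfrac{1}{(L_k-q)!}\tfrac{d^{L_k-q}}{dt^{L_k-q}}\bigl[(\lambda_k+t)^{L_k}\prod_r\Lambda_r^{L_r}(t)\bigr]_{t=-\lambda_k}$, expand each $(\lambda_r+t)^{-L_r}$ as a binomial series in the local coordinate $(\lambda_k+t)$, and match against the expansion of $\Lambda_k^q$ in powers of $1/(\lambda_k+t)$. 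The resulting generating-function identification $\gamma_{kq}(\mathbf{L}) = [z^{L_k-q}]\prod_{r\neq k}\beta_{kr}^{L_r}/(1-\beta_{rk}z)^{L_r}$ agrees term by term with the stated formula, and the ``$\forall t$'' assertion is automatic since both sides are rational functions agreeing everywhere they are defined.
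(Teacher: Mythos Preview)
The paper does not actually prove this lemma in-text; it writes only ``See \cite{Gugan12}.'' So a line-by-line comparison is impossible from the paper alone. That said, the paper's surrounding prose explicitly motivates the lemma as ``an idea similar to partial fraction expansion for rational functions in $t$,'' which is exactly the viewpoint you adopt, so your plan is almost certainly in the same spirit as the deferred proof.

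Your outline is sound. The reduction to $\prod_{k\in\mathcal{D}}\Lambda_k^{L_k}(t)=\sum_{k,q}\gamma_{kq}(\mathbf{L})\Lambda_k^q(t)$ is correct (the $\lambda_{d+1}$ powers cancel as you say). The two-term identity $\Lambda_r\Lambda_k=\beta_{rk}\Lambda_r+\beta_{kr}\Lambda_k$ does hold; a direct check gives
\[
\beta_{rk}\Lambda_r+\beta_{kr}\Lambda_k=\frac{(\lambda_r-\lambda_{d+1})(\lambda_k-\lambda_{d+1})t}{\lambda_r-\lambda_k}\Bigl(\frac{\lambda_r}{\lambda_r+t}-\frac{\lambda_k}{\lambda_k+t}\Bigr)=\Lambda_r(t)\Lambda_k(t).
\]
Your uniqueness/dimension argument is also fine: each $\Lambda_k^q$ is bounded at infinity, vanishes at $0$, and has a pole of exact order $q$ at $-\lambda_k$ only, so the family $\{\Lambda_k^q\}$ is triangular against the standard partial-fraction basis and hence linearly independent, with the dimension matching. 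Two small points to tighten: (i) in your displayed recursion the $\beta$ subscripts are swapped---from the two-term identity one gets $\Lambda_r^b\Lambda_k^a=\beta_{rk}\Lambda_r^{b}\Lambda_k^{a-1}+\beta_{kr}\Lambda_r^{b-1}\Lambda_k^{a}$, not the version you wrote; this is cosmetic but worth fixing so the walk weights land where you want them. (ii) The residue formula you quote yields the coefficient of $1/(\lambda_k+t)^q$ in the \emph{standard} partial-fraction basis, whereas you need the coefficient of $\Lambda_k^q$; since $\Lambda_k^q=(\lambda_k-\lambda_{d+1})^q t^q/(\lambda_k+t)^q$ expands as a finite triangular combination of $1/(\lambda_k+t)^j$ for $1\le j\le q$, an explicit (routine) change-of-basis step is needed before you can read off the generating-function form $[z^{L_k-q}]\prod_{r\neq k}\beta_{kr}^{L_r}(1-\beta_{rk}z)^{-L_r}$. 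Once that is made explicit, the argument is complete.
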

\begin{proof}
See \cite{Gugan12}.
\end{proof}
Applying this expansion to each coefficient in \eqref{eq:afterColComTerms} and regrouping shows that it can be rewritten as
\begin{equation}
\label{eq:genCanPol}
f(\mathbf{x};t)=\sum_{k=1}^{d}\sum_{q=1}^{N_{i}}h_{kq}(\mathbf{x})\Lambda_{k}^{q}(t)\lambda_{d+1}^{N_{i}-q}-c(t),
\end{equation}
where $c(t)=\mu_i(t) - \lambda_{d+1}^{N_{i}}$ and
\begin{equation}
h_{kq}(\mathbf{x})=\sum_{\mathbf{L}\in\Delta_{d+1,N_{i}},\, L_{k}\ge q}\frac{\gamma_{kq}(\mathbf{L})}{\lambda_{d+1}^{N_{i}-q-L_{d+1}}}g(\mathbf{x};\mathbf{L}). \label{eq:elemPolWeiCombo}
\end{equation}

\noindent \emph{Step3-Eliminate dependence on $\tau$}: The advantage of \eqref{eq:genCanPol} is that apart from $c(t),$ the number of t-dependent coefficients equals $d\cdot N_{i},$ which is exactly the number of unknowns in the polynomial $f$. Further, as shown below, they are linearly independent.

\begin{lemma}
\label{lem:genTtauMatrix}
For $k \in [d\cdot N_i],$ let $b_k = \min \{ j \in [d]:j\cdot N_i \geq k \}.$ Then the matrix $T_{\tau}$, where for $j,k \in [d \cdot N_i]$
\begin{equation}
(T_{\tau})_{jk} = \Lambda_{b_k}^{k - (b_k - 1)\cdot N_i}(t_j) \lambda_{d + 1}^{b_k \cdot N_i - k}, \label{eq:coeffMat}
\end{equation}
is non-singular.
\end{lemma}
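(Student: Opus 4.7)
The plan is to interpret the columns of $T_\tau$ as evaluations of certain rational functions at the sample points $t_1, \ldots, t_{d \cdot N_i}$, and then deduce non-singularity from the linear independence of those rational functions. Specifically, for each $k \in [d \cdot N_i]$, define
\[
g_k(t) := \Lambda_{b_k}^{\,k-(b_k-1)N_i}(t)\,\lambda_{d+1}^{\,b_k N_i - k},
\]
so that $(T_\tau)_{jk} = g_k(t_j)$. Suppose $\sum_k c_k\,(T_\tau)_{jk}=0$ for every $j$, and form the single rational function $p(t):=\sum_k c_k\, g_k(t)$; the goal is to show $p\equiv 0$ and then that this forces every $c_k$ to vanish.

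To show $p\equiv 0$, I would clear denominators using $\Lambda_b(t)=(\lambda_b-\lambda_{d+1})\,t/(\lambda_b+t)$. Multiplying by $Q(t):=\prod_{b=1}^{d}(\lambda_b+t)^{N_i}$ produces a polynomial $P(t):=Q(t)\,p(t)$ of degree at most $d\cdot N_i$ (in fact each term has degree exactly $d\cdot N_i$, which is a quick monomial count). Since $\Lambda_b(0)=0$ and the exponent of $\Lambda_{b_k}$ in $g_k$ is always at least $1$, every $g_k(t)$ vanishes at $t=0$, so $t$ divides $P(t)$. Hence $P(t)/t$ has degree at most $d\cdot N_i-1$, yet it vanishes at the $d\cdot N_i$ distinct nonzero points $t_1,\dots,t_{d\cdot N_i}\in\mathbb{R}_{++}$. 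Consequently $P\equiv 0$ and therefore $p\equiv 0$.

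Next, I would extract the coefficients by a partial-fraction / pole-order argument. Write $p(t)=\sum_{b=1}^{d} p_b(t)$ where $p_b(t):=\sum_{q=1}^{N_i} c_{(b-1)N_i+q}\,\Lambda_b^{q}(t)\,\lambda_{d+1}^{N_i-q}$. Each $p_b$ is a rational function whose only possible pole lies at $t=-\lambda_b$; since the $\lambda_b$ are distinct and strictly positive, these pole locations are distinct. Hence $p\equiv 0$ forces $p_b\equiv 0$ for every $b$. Finally, because $\Lambda_b^{q}(t)=(\lambda_b-\lambda_{d+1})^{q}\,t^{q}(\lambda_b+t)^{-q}$ has a pole of order exactly $q$ at $t=-\lambda_b$, the family $\{\Lambda_b^{q}\}_{q=1}^{N_i}$ is linearly independent over the reals (look at Laurent coefficients at $-\lambda_b$), so $c_{(b-1)N_i+q}=0$ for each $q$, and all $c_k$ vanish.

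The main obstacle I expect is the degree-counting step: one has to verify that after clearing denominators the resulting polynomial really is of degree at most $d\cdot N_i$ and that the factor of $t$ coming from $\Lambda_b(0)=0$ is genuinely shared by all summands, so that unisolvence at the $d\cdot N_i$ chosen nodes suffices to conclude $P\equiv 0$. Once that reduction is in hand, the pole-order step is essentially automatic from the fact that the $\lambda_b$'s are distinct, which is already part of the standing assumptions in the problem setup.
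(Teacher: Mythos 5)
Your proof is correct and self-contained. The paper gives no argument for this lemma---it defers the proof to the technical report cited there---so there is no in-paper proof to compare against; your route (clear denominators, gain the extra root at $t=0$ from $\Lambda_b(0)=0$ so that vanishing at the $d\cdot N_i$ distinct positive nodes forces the degree-$\le d\cdot N_i$ polynomial $P$, and hence the rational function $p$, to vanish identically; then separate the blocks by their poles at the distinct points $-\lambda_b$ and kill the coefficients via leading Laurent terms) is a natural and complete one, and the degree count $q+(N_i-q)+(d-1)N_i=d\cdot N_i$ you worry about does check out. One small ordering issue: the sentence ``hence $p\equiv 0$ forces $p_b\equiv 0$'' does not follow from distinctness of the poles alone, because each $p_b$ is bounded as $t\to\infty$, so at that stage you only get that $p_b$ is pole-free and therefore a constant, and these constants could a priori be nonzero with zero sum. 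The Laurent-coefficient argument in your very next sentence is what actually closes this: the principal part of $p$ at $-\lambda_b$ equals that of $p_b$, and if $q^{*}$ is the largest $q$ with $c_{(b-1)N_i+q}\neq 0$, the coefficient of $(\lambda_b+t)^{-q^{*}}$ is $c_{(b-1)N_i+q^{*}}\,\lambda_{d+1}^{N_i-q^{*}}(\lambda_b-\lambda_{d+1})^{q^{*}}(-\lambda_b)^{q^{*}}\neq 0$, a contradiction. So simply swap the order: deduce pole-freeness of each $p_b$ first, then conclude $c_{(b-1)N_i+q}=0$ for all $q$, and only then $p_b\equiv 0$.
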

\begin{proof}
See \cite{Gugan12}.
\end{proof}
Now observe that if we let $\mathbf{c_{\tau}} \equiv (c(t_{1}),\ldots,c(t_{d\cdot N_{i}})),$ $\mathcal{E}_k(\mathbf{x}) \equiv (h_{k1}(\mathbf{x}), \ldots, h_{kN_{i}}(\mathbf{x}))$ and $\mathcal{E}(\mathbf{x})$ $\equiv (\mathcal{E}_1(\mathbf{x}),\ldots,$ $\mathcal{E}_d(\mathbf{x})),$ then \eqref{eq:tempPolSys} can be equivalently expressed as
\begin{equation}
\label{eq:genSymPolSys}
T_{\tau}\mathcal{E}(\mathbf{x})-\mathbf{c}_{\tau}=0.
\end{equation}
Premultiplying \eqref{eq:genSymPolSys} by $\left(T_{\tau}\right)^{-1},$ which now exists by Lemma~\ref{lem:genTtauMatrix}, we obtain
\begin{equation}
\label{eq:genSecSymPolSys}
\mathcal{E}(\mathbf{x})-\left(T_{\tau}\right)^{-1}\mathbf{c}_{\tau}=0.
\end{equation}
A crucial point to note now is that $\mathbf{w}\equiv(\mathbf{w}_{1},\ldots,\mathbf{w}_{N_{i}})$ is an obvious root of \eqref{eq:tempGenPoly} and hence of \eqref{eq:genSecSymPolSys}. This immediately implies that $\left(T_{\tau}\right)^{-1}\mathbf{c}_{\tau} = \mathcal{E}(\mathbf{w})$ and consequently \eqref{eq:genSecSymPolSys} can rewritten  as
\begin{equation}
H_{i}(\mathbf{x})\equiv\mathcal{E}(\mathbf{x})-\mathcal{E}(\mathbf{w})=0.\label{eq:genElePolSys}
\end{equation}
Note that \eqref{eq:genElePolSys} is devoid of any reference to the set $\tau$ and can be arrived at using any valid $\tau.$ Furthermore, because of the equivalence between \eqref{eq:tempPolSys} and \eqref{eq:genElePolSys}, any conclusion that we can draw for \eqref{eq:genElePolSys} must hold automatically for the system of \eqref{eq:tempPolSys}. Because of these reasons, we will henceforth refer to \eqref{eq:genElePolSys} as the EPS.

\begin{example}
Let $N_i = d = 2.$ Also, let $\lambda_1 = 5, \lambda_2 = 3$ and $\lambda_3 = 1.$ Then the map $\mathcal{E}$ described above is given by
\begin{equation}
\label{eq:egEPS}
\mathcal{E}(\mathbf{x})=\left(
\begin{array}{c}
x_{11} + x_{21} + 5(x_{11}x_{22} + x_{12}x_{21})\\

x_{11}x_{21}\\

x_{12} + x_{22} - 6(x_{11}x_{22} + x_{12}x_{21})\\

x_{12}x_{22}\\
\end{array}
\right).
\vspace{0.2cm}
\end{equation}

\end{example}

We next describe some special features of the EPS. Let $\mathbf{x}_{\sigma} := (\mathbf{x}_{\sigma(1)}, \ldots, \mathbf{x}_{\sigma(N_i)}),$ $\sigma \in S_{N_i},$  denote a permutation of the vectors $\mathbf{x}_1, \ldots, \mathbf{x}_{N_i}$ and $\pi_{\mathbf{x}}:=\{\mathbf{x}_{\sigma}: \sigma \in S_{N_i}\}.$

\begin{lemma}
\label{lem:EPSSymm}
$H_i(\mathbf{x}) = H_i(\mathbf{x}_\sigma),$ $\forall \sigma \in S_{N_i}.$ That is, the map $H_i$ is symmetric.
\end{lemma}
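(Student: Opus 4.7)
The plan is to reduce the symmetry of $H_i$ to the obvious symmetry of the original product in \eqref{eq:tempGenPoly}, using the representation \eqref{eq:genCanPol} together with the linear independence of the $t$-dependent basis that is implicit in Lemma~\ref{lem:genTtauMatrix}.

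First I would note that since $H_i(\mathbf{x}) = \mathcal{E}(\mathbf{x}) - \mathcal{E}(\mathbf{w})$, it is enough to show that every component $h_{kq}$ of $\mathcal{E}$ is invariant under permutations $\sigma \in S_{N_i}$ of the blocks $\mathbf{x}_1,\ldots,\mathbf{x}_{N_i}$. The key observation is that the defining expression
\[
f(\mathbf{x};t) \;=\; \prod_{j=1}^{N_i}\!\left[\sum_{k=1}^{d} x_{jk}\Lambda_k(t) + \lambda_{d+1}\right] - \mu_i(t)
\]
from \eqref{eq:tempGenPoly} is manifestly symmetric in the blocks $\mathbf{x}_1,\ldots,\mathbf{x}_{N_i}$, because the product over $j$ is commutative and $\mu_i(t)$ does not depend on $\mathbf{x}$. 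Thus $f(\mathbf{x};t) = f(\mathbf{x}_\sigma;t)$ identically in $t$ for every $\sigma \in S_{N_i}$.

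Next I would invoke the representation \eqref{eq:genCanPol}, which rewrites
\[
f(\mathbf{x};t) - f(\mathbf{x}_\sigma;t) = \sum_{k=1}^{d}\sum_{q=1}^{N_i}\bigl[h_{kq}(\mathbf{x}) - h_{kq}(\mathbf{x}_\sigma)\bigr]\Lambda_k^q(t)\lambda_{d+1}^{N_i-q},
\]
since the $t$-dependent term $c(t)$ cancels. The left-hand side vanishes for every $t$ by the previous paragraph. The crucial point is that the $dN_i$ functions $\{\Lambda_k^q(t)\lambda_{d+1}^{N_i-q}\}_{1\le k\le d,\,1\le q\le N_i}$ are linearly independent over $\mathbb{R}$: indeed, Lemma~\ref{lem:genTtauMatrix} asserts that for any distinct $t_1,\ldots,t_{dN_i}\in\mathbb{R}_{++}$ the evaluation matrix $T_\tau$ built from precisely these functions is non-singular, which is equivalent to their linear independence as functions of $t$.

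Consequently, each coefficient in the display must vanish separately, giving $h_{kq}(\mathbf{x}) = h_{kq}(\mathbf{x}_\sigma)$ for all $k,q$ and all $\sigma\in S_{N_i}$. This yields $\mathcal{E}(\mathbf{x}) = \mathcal{E}(\mathbf{x}_\sigma)$ and hence $H_i(\mathbf{x}) = H_i(\mathbf{x}_\sigma)$, as required. I do not anticipate any real obstacle: the entire argument hinges on already-established linear independence, and the symmetry of $f(\mathbf{x};t)$ is immediate from the product form. An alternative, more combinatorial, route would work directly from \eqref{eq:elemPolWeiCombo} by showing that each $g(\mathbf{x};\mathbf{L})$ is symmetric because $\mathcal{B}_\mathbf{L}$ is closed under the action $\mathbf{b}\mapsto(b_{\sigma^{-1}(1)},\ldots,b_{\sigma^{-1}(N_i)})$, but the route above is cleaner since it sidesteps the bookkeeping in the definitions of $\gamma_{kq}$ and $g$.
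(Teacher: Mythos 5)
Your proof is correct and takes essentially the same route as the paper: the paper's one-line argument observes that $H_i(\mathbf{x}) = (T_\tau)^{-1}F_\tau(\mathbf{x})$ with $F_\tau$ symmetric, which is exactly your two ingredients --- the manifest symmetry of $f(\mathbf{x};t)$ in the blocks for every $t$, plus the invertibility of $T_\tau$ from Lemma~\ref{lem:genTtauMatrix} (equivalently, the linear independence of the coefficient functions $\Lambda_k^q(t)\lambda_{d+1}^{N_i-q}$). You have simply unpacked the same argument in more detail.
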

\begin{proof}
Observe that $H_i(\mathbf{x}) = (T_\tau)^{-1} F_{\tau}(\mathbf{x})$ and $F_\tau,$ as defined in \eqref{eq:tempPolSys}, is symmetric. The result thus follows.
\end{proof}

Next recall that if the complement of a solution set of a polynomial system is non-empty then it must be open dense in the Euclidean topology. This fact and  the above result help us now to show that the EPS is almost always well behaved.

\begin{lemma}
\label{lem:EPSWellBeh}
There exists an open dense set $\mathcal{R}$ of $\mathbb{R}^{d\cdot N_i}$ such that if $\mathbf{w} \in \mathcal{R}$ then the solution set of the EPS satisfies the following properties.
\begin{enumerate}
\item $\mathbf{w} \in V(H_i).$
\item If $\mathbf{x}^{*} \in V(H_i),$ then $\pi_{\mathbf{x}^{*}} \subset V(H_i).$ \label{prop:genSym}
\item $|V(H_i)| = k \times N_i!,$ where $k \in \mathbb{N}$ is independent of $\mathbf{w} \in \mathcal{R}.$ Further, each solution is non-singular.
\end{enumerate}
\end{lemma}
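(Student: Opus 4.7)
Parts (1) and (2) are immediate. Plugging $\mathbf{x} = \mathbf{w}$ into $H_i(\mathbf{x}) = \mathcal{E}(\mathbf{x}) - \mathcal{E}(\mathbf{w})$ yields zero, giving (1); and by Lemma~\ref{lem:EPSSymm}, $H_i(\mathbf{x}^*_\sigma) = H_i(\mathbf{x}^*) = 0$ for every $\sigma \in S_{N_i}$ whenever $\mathbf{x}^* \in V(H_i)$, so the entire orbit $\pi_{\mathbf{x}^*}$ is contained in $V(H_i)$.

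Part (3) is the substantive claim. The plan is to view $H_i(\mathbf{x}) = 0$ as the fiber of the polynomial map $\mathcal{E}: \mathbb{C}^{d\cdot N_i} \to \mathbb{C}^{d\cdot N_i}$ over the point $\mathcal{E}(\mathbf{w})$, and to exploit generic-fiber theory. Since source and target have equal dimension, $\mathcal{E}$ is dominant as soon as its Jacobian $\dot{\mathcal{E}}$ is non-singular at some test point. Granted dominance, classical results (constructibility of fiber dimension combined with a Sard-type statement for dominant polynomial maps over $\mathbb{C}$) produce a Zariski-open dense subset $\mathcal{U}$ over which every fiber is a finite set of non-singular points of a common cardinality $K$. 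The preimage of $\mathcal{U}$ under $\mathbf{w} \mapsto \mathcal{E}(\mathbf{w})$, intersected with $\mathbb{R}^{d\cdot N_i}$, then yields an open dense set $\mathcal{R}_1$ on which every $\mathbf{x}^* \in V(H_i)$ is non-singular and $|V(H_i)| = K$. To upgrade $K$ to a multiple of $N_i!$, I invoke Lemma~\ref{lem:EPSSymm}: an $S_{N_i}$-orbit in $\mathbb{C}^{d\cdot N_i}$ has size strictly less than $N_i!$ precisely when some sub-blocks coincide, $\mathbf{x}_j = \mathbf{x}_{j'}$ for $j \ne j'$, and the locus of $\mathbf{w}$ admitting such a coincident solution in $V(H_i)$ is a proper algebraic subset. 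Letting $\mathcal{R}_2$ be its complement and $\mathcal{R} := \mathcal{R}_1 \cap \mathcal{R}_2$, the solution set $V(H_i)$ splits into $k := K/N_i!$ free $S_{N_i}$-orbits, giving $|V(H_i)| = k \cdot N_i!$ with $k$ independent of $\mathbf{w} \in \mathcal{R}$.

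The main obstacle is verifying generic non-singularity of $\dot{\mathcal{E}}$. The coordinate functions $h_{kq}$ in \eqref{eq:elemPolWeiCombo} are written as elaborate sums over types and partitions, so a direct rank computation is painful. I expect the cleanest route is to exploit the factorization $H_i(\mathbf{x}) = (T_\tau)^{-1} F_\tau(\mathbf{x})$ used to derive \eqref{eq:genElePolSys}: since $T_\tau$ is invertible by Lemma~\ref{lem:genTtauMatrix}, the generic non-singularity of $\dot{H}_i$ reduces to that of $\dot{F}_\tau$, and the latter is built from the transparent product form of \eqref{eq:tempGenPoly} at the sample points $t_k$, which can be evaluated at a convenient test weight $\mathbf{w}_0$ with distinct sub-blocks to exhibit a non-zero Jacobian determinant.
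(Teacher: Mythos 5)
The paper does not actually print a proof of this lemma (it defers to \cite{Gugan12}), so there is nothing to compare against line by line; I can only assess your argument on its own terms. Parts (1) and (2) are indeed immediate, and your strategy for part (3) is a legitimate route: view $V(H_i)$ as the fiber of $\mathcal{E}:\mathbb{C}^{d\cdot N_i}\to\mathbb{C}^{d\cdot N_i}$ over $\mathcal{E}(\mathbf{w})$, invoke generic-fiber and generic-smoothness theory for a dominant map in characteristic zero to get a Zariski-open dense subset of the target over which all fibers are finite, reduced, and of constant cardinality $K$, pull back along $\mathbf{w}\mapsto\mathcal{E}(\mathbf{w})$ and intersect with $\mathbb{R}^{d\cdot N_i}$ (the complement is the real locus of a proper subvariety, hence nowhere dense --- exactly the fact the paper quotes just before the lemma), and finally excise the $\mathbf{w}$ for which some fiber point lies on the diagonal locus $D=\{\mathbf{x}:\mathbf{x}_j=\mathbf{x}_{j'}\text{ for some }j\neq j'\}$ so that the $S_{N_i}$-action on $V(H_i)$ is free and $K=k\cdot N_i!$. (For that last excision you should say a word more: the reason the bad locus is proper is that $\dim\overline{\mathcal{E}(D)}\le\dim D<d\cdot N_i$, so $\mathcal{E}^{-1}(\overline{\mathcal{E}(D)})$ is a proper subvariety --- and note that this, too, presupposes dominance.)

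The genuine gap is the step you flag as ``the main obstacle'' and then do not close: the entire construction rests on $\mathcal{E}$ being dominant, equivalently on $\det\dot{\mathcal{E}}$ not vanishing identically, equivalently on the $d\cdot N_i$ polynomials $h_{kq}$ of \eqref{eq:elemPolWeiCombo} being algebraically independent. This is the only part of the lemma that depends on the specific structure of the EPS rather than on general algebraic geometry; if it fails, the fibers are positive-dimensional and both the finiteness and the count in part (3) collapse. Your reduction to $F_\tau$ via $\dot{F}_\tau=T_\tau\dot{\mathcal{E}}$ and Lemma~\ref{lem:genTtauMatrix} is correct, but ``evaluate at a convenient test weight $\mathbf{w}_0$ with distinct sub-blocks'' is not a proof, and the condition ``distinct sub-blocks'' is not sufficient: already for the worked example \eqref{eq:egEPS} the Jacobian $\dot{\mathcal{E}}$ is singular at $(x_{11},x_{12},x_{21},x_{22})=(1,0,0,1)$, whose blocks $(1,0)$ and $(0,1)$ are distinct, even though it is non-singular at, say, $(1,2,3,4)$. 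So a witness point must actually be produced and verified for general $d$ and $N_i$ (or replaced by a degeneration/induction argument exploiting the product form of \eqref{eq:tempGenPoly}); until that is done, the proof of part (3) is conditional.
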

\begin{proof}
See \cite{Gugan12}.
\end{proof}

We henceforth assume that $\mathbf{w} \in \mathcal{R}.$ Property \ref{prop:genSym} above then suggests that it suffices to work with
\begin{equation}
\label{eq:redSolSet}
\mathcal{M}_i = \{\alpha \in \mathbb{C}^d : \exists \mathbf{x}^* \in V(H_i) \mbox{ with } \mathbf{x}_1^* = \alpha \}.
\end{equation}
Observe that $\mathcal{W}_i := \{\mathbf{w}_1, \ldots, \mathbf{w}_{N_i}\} \subset \mathcal{M}_i.$ A point to note here is that $\mathcal{I}_i := \mathcal{M}_i \backslash \mathcal{W}_i$ is not empty in general.

Our next objective is to develop the above theory for the case where for each $i \in [m],$ instead of the exact value of $M_{Y_i}(t),$ we have access only to the IID realizations $\{Y_{il}\}_{l \geq 1}$ of the random variable $Y_i.$ That is, for each $k \in [N_i],$ we have to use the sample average $\hat{M}_{Y_i}(t_k; L) = \left(\sum_{l = 1}^{L} \exp(-t_k Y_{il})\right)/L $ for an appropriately chosen large $L,$ $\hat{c}(t_k; L) = \hat{M}_{Y_i}(t_k)(\lambda_{d+1} + t_k)^{N_i} - \lambda_{d+1}^{N_i}$ and $\hat{\mathbf{c}}_{\tau, L} \equiv (\hat{c}(t_1; L), \ldots, \hat{c}(t_{d \cdot N_i}; L))$ as substitutes for each $M_{Y_i}(t_k),$ each $c(t_k)$ and $\mathbf{c}_\tau$ respectively. But even then note that the noisy or the perturbed version of the EPS
\begin{equation}
\label{eq:NoiGenElePolSys}
\hat{H}_i(\mathbf{x}) \equiv \mathcal{E}(\mathbf{x}) - (T_{\tau})^{-1}(\hat{\mathbf{c}}_{\tau, L}) = 0.
\end{equation}
is always well defined. More importantly, the perturbation is only in its constant term. As in Lemma~\ref{lem:EPSSymm}, it then follows that the map $\hat{H}_i$ is symmetric.

Next observe that since $\mathcal{R}$ is open (see Lemma~\ref{lem:EPSWellBeh}), there exists a small enough $\bar{\delta}_i > 0$ such that $B(\mathbf{w}; \bar{\delta}_i) \subset \mathcal{R}.$ Using the regularity of solutions of the EPS (see Property 3 of Lemma~\ref{lem:EPSWellBeh}), the inverse function theorem then gives us the following result.

\begin{lemma}
\label{lem:NoiEPSWellBeh}
Let $\delta \in (0, \bar{\delta}_i)$ be such that for any two distinct solutions in $V(H_i),$ say $\mathbf{x}^{*}$ and $\mathbf{y}^{*},$ $B(\mathbf{x}^{*}; \delta) \cap B(\mathbf{y}^{*}; \delta) = \emptyset.$ Then there exists an $\epsilon(\delta) > 0$ such that if $\mathbf{u} \in \mathbb{R}^{d \cdot N_i}$ and $||\mathbf{u} - \mathcal{E}(\mathbf{w})|| < \epsilon(\delta)$ then the solution set $V(\hat{H}_i)$ of the perturbed EPS $\mathcal{E}(\mathbf{x}) - \mathbf{u} = 0$ satisfies the following:
\begin{enumerate}
\item All roots in $V(\hat{H}_i)$ are regular points of the map $\mathcal{E}.$

\item For each $\mathbf{x}^{*} \in V(H_i),$ there is a unique $\mathbf{z}^{*} \in V(\hat{H}_i)$ such that $||\mathbf{x}^{*} - \mathbf{z}^{*}|| < \delta.$

\item For each $\mathbf{z}^{*} \in V(\hat{H}_i),$ there is a unique $\mathbf{x}^{*} \in V(H_i)$ such that $||\mathbf{x}^{*} - \mathbf{z}^{*}|| < \delta.$
\end{enumerate}
\end{lemma}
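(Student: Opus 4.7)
The plan is to combine the inverse function theorem, applied locally at each root in $V(H_i)$, with a global root-counting argument to control the solutions of the perturbed system.

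First, Property 3 of Lemma~\ref{lem:EPSWellBeh} tells us that $V(H_i)$ is finite and that every $\mathbf{x}^{*} \in V(H_i)$ is a regular point of $\mathcal{E}$, i.e., $\dot{\mathcal{E}}(\mathbf{x}^{*})$ is invertible. The inverse function theorem then yields, for each such $\mathbf{x}^{*}$, open neighborhoods $U_{\mathbf{x}^{*}} \ni \mathbf{x}^{*}$ and $V_{\mathbf{x}^{*}} \ni \mathcal{E}(\mathbf{w})$ (note that $\mathcal{E}(\mathbf{x}^{*}) = \mathcal{E}(\mathbf{w})$ for every root) on which $\mathcal{E}$ restricts to a diffeomorphism. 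I would shrink each $U_{\mathbf{x}^{*}}$ to sit inside $B(\mathbf{x}^{*}; \delta)$; this is possible since the $\delta$-balls around distinct roots are disjoint by hypothesis. A further shrinking using continuity of $\mathbf{x} \mapsto \dot{\mathcal{E}}(\mathbf{x})$ ensures that $\dot{\mathcal{E}}$ stays nonsingular on each $U_{\mathbf{x}^{*}}$. Because $V(H_i)$ is finite, $V_{*} := \bigcap_{\mathbf{x}^{*} \in V(H_i)} V_{\mathbf{x}^{*}}$ is open and contains $\mathcal{E}(\mathbf{w})$, so I can pick $\epsilon_1 > 0$ with $B(\mathcal{E}(\mathbf{w}); \epsilon_1) \subset V_{*}$. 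For any $\mathbf{u}$ within $\epsilon_1$ of $\mathcal{E}(\mathbf{w})$, each $U_{\mathbf{x}^{*}}$ contains exactly one preimage $\mathbf{z}^{*}$ of $\mathbf{u}$ under $\mathcal{E}$, and these preimages are pairwise distinct and regular. This immediately delivers Property 2, together with Property 1 restricted to the roots constructed this way.

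The main obstacle is Property 3, i.e., showing that no root of the perturbed system $\mathcal{E}(\mathbf{x}) - \mathbf{u} = 0$ lies outside $\bigcup_{\mathbf{x}^{*}} U_{\mathbf{x}^{*}}$ once $\epsilon(\delta)$ is chosen small enough. I would address this via stability of root counts: the polynomial system $\mathcal{E}(\mathbf{x}) - \mathbf{u}$ has multi-degree independent of $\mathbf{u}$, so by B\'ezout's theorem its total projective root count (with multiplicity) is a fixed integer, while Property 3 of Lemma~\ref{lem:EPSWellBeh} asserts that the affine count $|V(H_i)| = k \cdot N_i!$ is constant on $\mathcal{R}$. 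Consequently, no affine roots escape to or appear from infinity under perturbations that keep the image in a small neighborhood of $\mathcal{E}(\mathbf{w})$. Combining this with the classical continuity of algebraic roots with respect to the coefficients, any putative extra root of $\mathcal{E}(\mathbf{x}) - \mathbf{u}_n = 0$ along a sequence $\mathbf{u}_n \to \mathcal{E}(\mathbf{w})$ would, by compactness, converge to a root of $H_i$ lying outside every $U_{\mathbf{x}^{*}}$, a contradiction. Hence for a sufficiently small $\epsilon(\delta) \in (0, \epsilon_1]$ the preimages from the previous paragraph exhaust $V(\hat{H}_i)$, establishing Property 3 and completing Property 1 globally.
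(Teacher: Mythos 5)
The paper itself only sketches the mechanism for this lemma (regularity of the roots of the EPS plus the inverse function theorem) and defers all details to \cite{Gugan12}, so a line-by-line comparison is not possible; but your first paragraph --- applying the inverse function theorem at each of the finitely many non-singular roots and shrinking the resulting neighborhoods into the pairwise disjoint $\delta$-balls --- is exactly that mechanism, and it correctly delivers Property 2 (existence) and the local part of Property 1.

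The gap is in your justification of Property 3. The principle you invoke --- fixed B\'ezout number plus constancy of $|V(H_i)|$ on $\mathcal{R}$ implies that no affine roots come in from infinity when only the constant term is perturbed --- is false in general. For the system $\{xy = u_1,\; x = u_2\}$ in $\mathbb{C}^2$ the B\'ezout number is $2$ and the top-degree forms never change with $\mathbf{u}$, yet the fiber over $(u_1,0)$ has no affine points while the fiber over $(u_1,\varepsilon)$ has one: a root migrates in from the hyperplane at infinity because the intersection multiplicity there drops. Your closing compactness argument presupposes exactly the boundedness that this principle was supposed to supply, so it cannot rescue the step. What actually closes the argument is showing that $k\cdot N_i!$ equals the \emph{generic} root count $D$ of the full family $\{\mathcal{E}(\mathbf{x})=\mathbf{u}:\mathbf{u}\in\mathbb{C}^{d\cdot N_i}\}$, not merely that it is constant on the subfamily $\mathbf{u}=\mathcal{E}(\mathbf{w})$, $\mathbf{w}\in\mathcal{R}$: since $\dot{\mathcal{E}}$ is non-singular on $V(H_i)$, the image $\mathcal{E}(\mathcal{R})$ contains a nonempty open subset of $\mathbb{R}^{d\cdot N_i}$ and is therefore Zariski dense in the parameter space, so some $\mathcal{E}(\mathbf{w})$ with $\mathbf{w}\in\mathcal{R}$ is a generic parameter and $k\cdot N_i! = D$. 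The coefficient--parameter continuation theorem of Morgan and Sommese \cite{Morgan89} (already cited in the paper) then bounds the number of isolated solutions of $\mathcal{E}(\mathbf{x})=\mathbf{u}$, counted with multiplicity, by $D$ for \emph{every} $\mathbf{u}$; since your local construction already exhibits $D$ non-singular isolated solutions inside the $\delta$-balls, there can be no others. This yields Property 3 (and the uniqueness claims in Properties 2 and 3) with no compactness argument needed.
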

\begin{proof}
See \cite{Gugan12}.
\end{proof}

The above result, in simple words, states that if we can get hold of a close enough approximation of $\mathcal{E}(\mathbf{w}),$ say $\mathbf{u},$ then solving the perturbed EPS $\mathcal{E}(\mathbf{x}) - \mathbf{u} = 0$ is almost as good as solving the EPS of \eqref{eq:genElePolSys}. We now show how to acquire such an approximation of $\mathcal{E}(\mathbf{w}).$

\begin{lemma}
\label{lem:ExistL}
Let $\delta$ and $\epsilon(\delta)$ be as described in Lemma~\ref{lem:NoiEPSWellBeh}. Then for tolerable failure rate $\kappa > 0$ and the chosen set $\tau,$ $\exists L_{\tau, \delta, \kappa} \in \mathbb{N}$ such that if $L \geq L_{\tau, \delta, \kappa}$ then with probability greater than $1 - \kappa,$ we have $|| (T_\tau)^{-1} \hat{\mathbf{c}}_{\tau, L} - \mathcal{E}(\mathbf{w}) || < \epsilon(\delta).$
\end{lemma}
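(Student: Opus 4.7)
The approach is a routine concentration argument together with a conditioning bound on $T_\tau$. First, from the derivation of the EPS we have $\mathcal{E}(\mathbf{w}) = (T_\tau)^{-1}\mathbf{c}_\tau$, so
\[
(T_\tau)^{-1}\hat{\mathbf{c}}_{\tau, L} - \mathcal{E}(\mathbf{w}) \;=\; (T_\tau)^{-1}\bigl(\hat{\mathbf{c}}_{\tau, L} - \mathbf{c}_\tau\bigr).
\]
Setting $C_\tau := \|(T_\tau)^{-1}\|_{\mathrm{op}}$, which is finite by Lemma~\ref{lem:genTtauMatrix}, it is enough to force $\|\hat{\mathbf{c}}_{\tau, L} - \mathbf{c}_\tau\| < \epsilon(\delta)/C_\tau$ with probability at least $1-\kappa$.

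The plan is to proceed componentwise. For each $t_k \in \tau$,
\[
\hat{c}(t_k; L) - c(t_k) \;=\; \bigl(\hat{M}_{Y_i}(t_k; L) - M_{Y_i}(t_k)\bigr)(\lambda_{d+1} + t_k)^{N_i}.
\]
Since $Y_{il} \geq 0$ and $t_k > 0$, every random variable $\exp(-t_k Y_{il})$ lies in $(0,1]$ and has mean $M_{Y_i}(t_k)$. Hoeffding's inequality therefore yields, for any $\eta_k > 0$,
\[
\mathbb{P}\bigl(|\hat{M}_{Y_i}(t_k; L) - M_{Y_i}(t_k)| > \eta_k\bigr) \;\leq\; 2\exp(-2L\eta_k^2).
\]

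Now I would choose $\eta_k := \epsilon(\delta)/\bigl(C_\tau \sqrt{d \cdot N_i}\,(\lambda_{d+1}+t_k)^{N_i}\bigr)$. Then, on the event where the $k$-th MGF estimate lies within $\eta_k$ of $M_{Y_i}(t_k)$ for every $k \in [d\cdot N_i]$, summing the squared componentwise errors gives $\|\hat{\mathbf{c}}_{\tau, L} - \mathbf{c}_\tau\|^2 < \epsilon(\delta)^2/C_\tau^2$, as required. A union bound over the $d\cdot N_i$ points of $\tau$ shows the complementary event has probability at most $2(d\cdot N_i)\exp(-2L\eta_{\min}^2)$ with $\eta_{\min} := \min_k \eta_k > 0$. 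Taking $L_{\tau, \delta, \kappa}$ to be any integer at least $(2\eta_{\min}^2)^{-1}\log\bigl(2(d\cdot N_i)/\kappa\bigr)$ then completes the argument.

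There is no substantive obstacle: the claim is essentially the (quantitative) law of large numbers applied to the bounded i.i.d. samples $\exp(-t_k Y_{il})$, transported through the fixed linear map $(T_\tau)^{-1}$. The only mild bookkeeping concerns the $\tau$-dependent constants $C_\tau$ and $(\lambda_{d+1}+t_k)^{N_i}$, which are all finite and deterministic once $\tau$ is fixed, and are absorbed into $L_{\tau, \delta, \kappa}$; since the lemma only asserts existence, no explicit rate is required.
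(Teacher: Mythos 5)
Your proposal is correct and follows essentially the same route as the paper: Hoeffding's inequality applied to the bounded variables $\exp(-t_k Y_{il})$, combined with the identity $\mathcal{E}(\mathbf{w}) = (T_\tau)^{-1}\mathbf{c}_\tau$ and the fixed, invertible $T_\tau$ from Lemma~\ref{lem:genTtauMatrix}. You merely make explicit the componentwise tolerances, the union bound over the $d\cdot N_i$ points of $\tau$, and the resulting choice of $L_{\tau,\delta,\kappa}$, all of which the paper compresses into ``the result is now immediate.''
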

\begin{proof}
Note that $\exp(-t_k Y_{il}) \in [0,1]$ $\forall i, l$ and $k.$ The Hoeffding inequality (see \cite{Hoeffding62}) then shows that for any $\epsilon > 0$, $\Pr\{|\hat{M}_{Y_i}(t_k; L) - M_{Y_i}(t_k)| > \epsilon\} \leq \exp(-2 \epsilon^2 L).$ Since $\mathcal{E}(\mathbf{w}) = (T_\tau)^{-1}\mathbf{c}_\tau,$ the result is now immediate.
\end{proof}

Let us now fix a $L \geq L_{\tau, \delta, \kappa}$ and let $\mathcal{A}_i(\kappa)$ denote the event $|| (T_\tau)^{-1} \hat{\mathbf{c}}_{\tau, L} - \mathcal{E}(\mathbf{w}) || < \epsilon(\delta).$ Clearly, $\Pr\{\mathcal{A}^{c}_i(\kappa)\} \leq \kappa.$ Observe that when $\mathcal{A}_i(\kappa)$ is a success the solution set of \eqref{eq:NoiGenElePolSys}, with $L$ as chosen above, satisfies all properties given in Lemma~\ref{lem:NoiEPSWellBeh}. Because of the symmetry of the map $\hat{H}_i,$ as in \eqref{eq:redSolSet}, it again suffices to work with
\begin{equation}
\hat{\mathcal{M}}_i = \{\hat{\alpha} \in \mathbb{C}^d : \exists \mathbf{z}^{*} \in V(\hat{H}_i) \mbox{ with } \mathbf{z}_1^{*} = \alpha\}.
\end{equation}

We are now done discussing the EPS for an arbitrary $i \in [m].$ In summary, we have managed to obtain a set $\hat{\mathcal{M}}_i$ in which a close approximation of the weight vectors of random variables ${X_j}$ that add up to give $Y_i$ are present with high probability. The next subsection takes a unified view of the solution sets $\{\hat{\mathcal{M}}_i: i\in [m]\}$ to match the weight vectors to the corresponding random variables. But before that, we redefine $\mathcal{W}_i$ as $\{\mathbf{w}_j: j \in p_i\}.$ Accordingly, $\mathcal{M}_i, \hat{\mathcal{M}}_i, V(H_i)$ and $V(\hat{H}_i)$ are also redefined using notations of Section~\ref{sec:model}.

\subsection{Parameter Matching using 1-identifiability}
\label{subsec:parMatch}

We begin by giving a physical interpretation for the $1-$identifiability condition of the matrix $A.$ For this, let $\mathcal{G}_{j} := \left\{ i\in[m]:j\in p_{i}\right\} $ and $\mathcal{B}_{j} := [m] \backslash \mathcal{G}_{j}.$
\begin{lemma}
  \label{lem:linkAsPathIntersection}
  For a $1-$identifiable matrix $A$, each index $j\in[N]$ satisfies
  \[
  \{j\} = \bigcap_{g\in\mathcal{G}_{j}} p_{g} \cap \bigcap_{b\in\mathcal{B}_{j}} p_{b}^{c} =: \mathcal{D}_{j}.
  \]
\end{lemma}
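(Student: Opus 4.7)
The plan is to unpack $\mathcal{D}_j$ in terms of the row pattern of column $j$ of $A$, and then invoke $1$-identifiability to rule out any ``duplicate'' column.

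First I would show the trivial inclusion $j \in \mathcal{D}_j$: by the very definition of $\mathcal{G}_j$, $j \in p_g$ for every $g \in \mathcal{G}_j$; and by the definition of $\mathcal{B}_j$, $j \notin p_b$, i.e., $j \in p_b^c$, for every $b \in \mathcal{B}_j$. So $j$ lies in the intersection.

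For the reverse inclusion, suppose $j' \in \mathcal{D}_j$. Then for every $g \in \mathcal{G}_j$ we have $a_{gj'} = 1$ (because $j' \in p_g$), and for every $b \in \mathcal{B}_j$ we have $a_{bj'} = 0$ (because $j' \in p_b^c$). Together with the definition of $\mathcal{G}_j$ and $\mathcal{B}_j$, this says precisely that $a_{ij'} = a_{ij}$ for every row $i \in [m]$; in other words, columns $j$ and $j'$ of $A$ are identical.

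Now comes the key step: if $j' \neq j$, then $A$ contains two identical columns, which are obviously linearly dependent. This contradicts $1$-identifiability (which, with $k=1$, asserts that every set of two columns is linearly independent). Hence $j' = j$, completing the proof. The argument is essentially bookkeeping, so I do not anticipate a genuine obstacle; the only thing worth being careful about is correctly translating the set-theoretic membership $j' \in p_g$ versus $j' \in p_b^c$ into the entrywise identities $a_{gj'}=1$ and $a_{bj'}=0$, since this is what permits the contradiction via $1$-identifiability.
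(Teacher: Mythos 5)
Your proposal is correct and follows essentially the same argument as the paper: the trivial inclusion $j\in\mathcal{D}_j$, then observing that any $j'\in\mathcal{D}_j$ forces columns $j$ and $j'$ of $A$ to be identical, which for $j'\neq j$ contradicts $1$-identifiability. You simply spell out the column-pattern bookkeeping that the paper leaves implicit.
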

\begin{proof}
By definition, $j \in \mathcal{D}_{j}.$ For converse, if $k\in\mathcal{D}_{j},$ $k \neq j,$ then columns $j$ and $k$ of $A$ are identical; contradicting its $1-$identifiability condition. Thus $\{j\}=\mathcal{D}_{j}.$
\end{proof}

An immediate result is the following.

\begin{corollary}
  \label{cor:uniqueValuePathValueIntersection}
  Suppose $A$ is a $1-$identifiable matrix. If the map $u: [N] \rightarrow X,$ where $X$ is an arbitrary set, is bijective and $\forall i \in [m],$ $v_{i} := \{u(j):j\in p_{i}\},$ then for each $j \in [N]$
  \[
  \{u(j)\} =  \bigcap_{g\in\mathcal{G}_{j}} v_{g} \cap \bigcap_{b\in\mathcal{B}_{j}} v_{b}^{c}
  \]
\end{corollary}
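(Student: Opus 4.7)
The plan is to reduce the claim to Lemma~\ref{lem:linkAsPathIntersection} by transporting the index-level identity through the bijection $u$. In other words, I will show that the set on the right-hand side is exactly the image under $u$ of $\mathcal{D}_j$, and then invoke the lemma to conclude $\mathcal{D}_j = \{j\}$, whence the right-hand side equals $\{u(j)\}$.

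First I would verify the forward inclusion $u(j) \in \bigcap_{g\in\mathcal{G}_j} v_g \cap \bigcap_{b\in\mathcal{B}_j} v_b^c$. For each $g \in \mathcal{G}_j$, by definition $j \in p_g$, so $u(j) \in \{u(k): k \in p_g\} = v_g$. For each $b \in \mathcal{B}_j$, we have $j \notin p_b$; since $u$ is injective, no $k \in p_b$ satisfies $u(k) = u(j)$, hence $u(j) \notin v_b$, i.e. $u(j) \in v_b^c$. This gives one inclusion immediately.

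For the reverse inclusion, I would pick any element $y$ belonging to the right-hand side. Since $u$ is a bijection onto $X$, there exists a unique $k \in [N]$ with $u(k) = y$. For each $g \in \mathcal{G}_j$, the membership $u(k) = y \in v_g$ together with injectivity of $u$ forces $k \in p_g$. Similarly, for each $b \in \mathcal{B}_j$, $u(k) \notin v_b$ forces $k \notin p_b$, i.e. $k \in p_b^c$. Thus $k \in \mathcal{D}_j$, and by Lemma~\ref{lem:linkAsPathIntersection} we have $\mathcal{D}_j = \{j\}$, so $k = j$ and $y = u(j)$.

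There is really no substantive obstacle here; the argument is a mechanical transfer of the previous lemma across the bijection $u$, and the only point requiring attention is the use of injectivity of $u$ in both directions (to interpret membership in $v_g$ and non-membership in $v_b$ in terms of the index $k$). Since Lemma~\ref{lem:linkAsPathIntersection} already does all the combinatorial work of exploiting $1$-identifiability, the corollary follows with essentially no additional content.
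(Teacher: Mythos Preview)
Your argument is correct and matches the paper's approach: the paper offers no explicit proof, introducing the corollary only with ``An immediate result is the following,'' which is precisely the mechanical transport of Lemma~\ref{lem:linkAsPathIntersection} through the bijection $u$ that you have spelled out.
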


By reframing this, we get the following result.

\begin{theorem}
  \label{thm:idlCasParMch}
Suppose $A$ is a $1-$identifiable matrix. If the weight vectors $\mathbf{w}_1, \ldots, \mathbf{w}_N$ are pairwise distinct then the rule
  \begin{equation}
    \psi:j \rightarrow \bigcap_{g\in\mathcal{G}_{j}} \mathcal{W}_{g} \cap \bigcap_{b\in\mathcal{B}_{j}} \mathcal{W}_{b}^{c},
    \label{eq:idlCasAssRule}
  \end{equation}
satisfies $\psi(j) = \mathbf{w}_{j}$.
\end{theorem}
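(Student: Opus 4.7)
The plan is to reduce the theorem directly to Corollary~1, which already performs all the combinatorial heavy lifting. The hypothesis of Corollary~1 asks for an arbitrary bijection $u:[N]\to X$ together with the derived sets $v_i=\{u(j):j\in p_i\}$. So my first step is to choose the natural candidate $X:=\{\mathbf{w}_1,\ldots,\mathbf{w}_N\}$ and define $u(j):=\mathbf{w}_j$.

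Next, I would verify that this $u$ is indeed bijective. Surjectivity holds by construction (the codomain is exactly the image). Injectivity is precisely the standing assumption of the theorem that the weight vectors $\mathbf{w}_1,\ldots,\mathbf{w}_N$ are pairwise distinct. With $u$ in hand, I would then identify the derived sets: for each $i\in[m]$,
\[
v_i=\{u(j):j\in p_i\}=\{\mathbf{w}_j:j\in p_i\}=\mathcal{W}_i,
\]
matching the (redefined) notation $\mathcal{W}_i$ used in the theorem.

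Finally, I would invoke Corollary~1, which under $1$-identifiability of $A$ gives $\{u(j)\}=\bigcap_{g\in\mathcal{G}_j} v_g\cap\bigcap_{b\in\mathcal{B}_j} v_b^c$ for every $j\in[N]$. Substituting $u(j)=\mathbf{w}_j$ and $v_i=\mathcal{W}_i$ yields $\{\mathbf{w}_j\}=\bigcap_{g\in\mathcal{G}_j}\mathcal{W}_g\cap\bigcap_{b\in\mathcal{B}_j}\mathcal{W}_b^c=\psi(j)$, which is the desired conclusion.

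There is no real obstacle here: the theorem is essentially a repackaging of Corollary~1, and all the structural content about $1$-identifiability has already been absorbed into Lemma~4 and its corollary. The only thing worth flagging is that the pairwise-distinctness hypothesis on the weight vectors is used exclusively to ensure the bijectivity of $u$; without it, the map $j\mapsto\mathbf{w}_j$ could fail to be injective and one would only recover a block of indices sharing a common weight rather than a single weight vector.
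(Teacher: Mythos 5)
Your proposal is correct and is exactly the argument the paper intends: the theorem is stated as a reframing of Corollary~\ref{cor:uniqueValuePathValueIntersection}, obtained by taking $u(j)=\mathbf{w}_j$ (bijective precisely because the weight vectors are pairwise distinct) so that $v_i=\mathcal{W}_i$. Your closing remark about where the distinctness hypothesis is used matches the paper's reasoning as well.
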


This result is where the complete potential of the $1-$ identifiability condition of $A$ is being truly taken advantage of. What this states is that if we had access to the collection of solution sets $\{\mathcal{W}_i: i \in [m]\}$ then using $\psi$ we would have been able to uniquely match the weight vectors to the random variables. But note that, at present, we have access only to the collection  $\{\mathcal{M}_i: i \in [m]\}$ in the ideal case and $\{\hat{\mathcal{M}}_i: i \in [m]\}$ in the perturbed case. Keeping this in mind, our goal now is to show that if $\forall i_1, i_2 \in [m],$ $i_1 \neq i_2,$
\begin{equation}
\mathcal{I}_{i_1} \cap \mathcal{M}_{i_2} = \emptyset, \label{eq:inValSolNotRel}
\end{equation}
a condition that always held in simulation experiments, then the rules (with minor modifications):
\begin{equation}
  \label{eq:idlCasGenAssRule}
	\psi:j \rightarrow \bigcap_{g\in\mathcal{G}_{j}} \mathcal{M}_{g} \cap \bigcap_{b\in\mathcal{B}_{j}} \mathcal{M}_{b}^{c}
\end{equation}
for the ideal case, and
\begin{equation}
  \label{eq:PerCasGenAssRule}
	\hat{\psi}:j \rightarrow \bigcap_{g\in\mathcal{G}_{j}} \hat{\mathcal{M}}_{g} \cap \bigcap_{b\in\mathcal{B}_{j}} \hat{\mathcal{M}}_{b}^{c}
\end{equation}
in the perturbed case, recover the correct weight vector associated to each random variable $X_j.$

We first discuss the ideal case. Let $\mathcal{S}:=\{j \in [N]: |\mathcal{G}_j| \geq 2\}.$ Because of \eqref{eq:inValSolNotRel} and Theorem~\ref{thm:idlCasParMch}, note that
\begin{enumerate}
\item If $j \in \mathcal{S},$  then $\psi(j) = \{\mathbf{w}_j\}.$

\item If $j \in \mathcal{S}^{c}, j \in p_{i^{*}}$ then $\psi(j) = \{\mathbf{w}_j\} \cup \mathcal{I}_{i^*}.$
\end{enumerate}
That is, \eqref{eq:idlCasGenAssRule} works perfectly fine when $j \in \mathcal{S}.$ The problem arises when $j \in \mathcal{S}^{c}$ as $\psi(j)$ does not give as output a unique vector. To correct this, fix $j \in \mathcal{S}^{c}.$ If $j \in p_{i^*}$ then let $\mathbf{v}^{sub} \equiv (\mathbf{w}_k : k \in p_{i^*} \backslash \{j\}).$ Because of $1-$identifiability, note that if $k \in p_{i^*}\backslash \{j\}$ then $k \in \mathcal{S}.$ From \eqref{eq:tempGenPoly} and \eqref{eq:tempPolSys}, it is also clear that $(\mathbf{v}^{sub}, \alpha) \in V(H_i)$ if and only if $\alpha = \mathbf{w}_j.$ This suggests that we need to match parameters in two stages. In stage 1, we use \eqref{eq:idlCasGenAssRule} to assign weight vectors to all those random variables $X_j$ such that $j \in \mathcal{S}.$ In stage 2, for each $j \in \mathcal{S}^{c},$ we identify $i^* \in [m]$ such that $j \in p_{i^*}.$  We then construct $\mathbf{v}^{sub}.$ We then assign to $j$ that unique $\alpha$ for which $(\mathbf{v}^{sub}, \alpha) \in V(H_{i^*}).$ Note that we are ignoring the trivial case where $|p_{i^*}| = 1.$ It is now clear that by using \eqref{eq:idlCasGenAssRule} with modifications as described above, at least for the ideal case, we can uniquely recover back for each random variable $X_j$ its corresponding weight vector $\mathbf{w}_j.$

We next handle the case of noisy measurements. Let $\mathcal{U}:= \cup_{i \in [m]} \mathcal{M}_i$ and $\hat{\mathcal{U}} := \cup_{i \in [m]} \hat{\mathcal{M}}_i.$ Observe that using \eqref{eq:PerCasGenAssRule} directly, with probability one, will satisfy $\hat{\psi}(j) = \emptyset$ for each $j \in [N].$ This happens because we are distinguishing across the solution sets the estimates obtained for a particular weight vector. Hence as a first step we need to define a relation $\sim$ on $\hat{\mathcal{U}}$ that associates these related elements. Recall from Lemmas~\ref{lem:NoiEPSWellBeh} and \ref{lem:ExistL} that the set $\hat{\mathcal{M}}_i$ can be constructed for any small enough choice of $\delta, \kappa > 0.$ With choice of $\delta$ that satisfies
\begin{equation}
\label{eq:delCond}
0 < 4\delta < \underset{\alpha, \beta \in \mathcal{U}}{\min} ||\alpha - \beta||,
\end{equation}
let us consider the event $\mathcal{A} : = \cap_{i \in [m]} \mathcal{A}_i(\kappa/m).$  Using a simple union bound, it follows that $\Pr\{\mathcal{A}^{c}\} \leq \kappa.$ Now suppose that the event $\mathcal{A}$ is a success. Then by \eqref{eq:delCond} and Lemma~\ref{lem:NoiEPSWellBeh}, the following observations follow trivially.

\begin{enumerate}
\item For each $i \in [m]$ and each $\alpha \in \mathcal{M}_i,$ there exists at least one $\hat{\alpha} \in \hat{\mathcal{M}}_i$ such that $||\hat{\alpha} - \alpha|| < \delta.$

\item For each $i \in [m]$ and each $\hat{\alpha} \in \hat{\mathcal{M}}_i,$ there exists precisely one $\alpha \in \mathcal{M}_i$ such that $||\hat{\alpha} - \alpha|| < \delta.$

\item Suppose for distinct elements $\alpha, \beta \in \mathcal{U},$ we have $\hat{\alpha}, \hat{\beta} \in \hat{\mathcal{U}}$ such that $||\hat{\alpha} - \alpha|| < \delta$ and $||\hat{\beta} - \beta|| < \delta.$ Then $||\hat{\alpha} - \hat{\beta}|| > 2\delta.$
\end{enumerate}

From these, it is clear that the relation $\sim$ on $\hat{\mathcal{U}}$ should be
\begin{equation}
\label{eq:Rel}
  \hat{\alpha} \sim \hat{\beta} \mbox{ iff } ||\hat{\alpha} - \hat{\beta}|| < 2\delta.
\end{equation}
It is also easy to see that, whenever the event $\mathcal{A}$ is a success, $\sim$ defines an equivalence relation on $\hat{\mathcal{U}}.$ For each $i \in [m],$ the obvious idea then is to replace each element of $\hat{\mathcal{M}}_i$ and its corresponding $d-$ dimensional component in $V(\hat{H}_i)$ with its equivalence class. It now follows that \eqref{eq:PerCasGenAssRule}, with modifications as was done for the ideal case, will satisfy
\begin{equation}
\hat{\psi}(j) = \{\hat{\alpha} \in \hat{\mathcal{U}}:||\hat{\alpha}- \mathbf{w}_j|| < \delta\}.
\end{equation}
This is obviously the best we could have done starting from the set $\{\hat{\mathcal{M}}_i: i \in [m]\}.$

We end this section by summarizing our complete method in an algorithmic fashion.  For each $i \in [m],$ let $\{Y_{il}\}_{l\geq 1}$ be the IID samples of $Y_i.$

\begin{algorithm} \textbf{\large Distribution tomography}
\label{alg:EstLinkDis}

\noindent \textit{\large Phase 1: Construct \& Solve the EPS.}

For each $i \in [m],$
\begin{enumerate}
\item Choose an arbitrary $\tau = \{t_1, \ldots, t_{d \cdot N_i}\}$ of distinct positive real numbers.
\item For a large enough $L\in\mathbb{N}$ and each $t_{j}\in\tau,$ set $\hat{M}_{Y_{i}}(t_{j})=\left(\sum_{l=1}^{L}\exp(-t_jY_{il})\right)/L,$ $\hat{\mu}_{i}(t_{j})=(\lambda_{d+1}+t_{j})^{N_{i}}\hat{M}_{Y_{i}}(t_{j})$ and $\hat{c}(t_{j})=\hat{\mu}_{i}(t_{j}) - \lambda_{d+1}^{N_{i}}.$  Using this, construct $\hat{c}_{\tau}\equiv(\hat{c}(t_{1}),\ldots,\hat{c}(t_{d\cdot N_{i}})).$
\item Solve $\mathcal{E}(\mathbf{x})-T_{\tau}^{-1}\hat{c}_{\tau}=0$ using any standard solver for polynomial systems.
\item Build $\hat{\mathcal{M}}_i = \{\alpha \in \mathbb{C}^d : \exists \mathbf{x}^* \in V(\hat{H}_i) \mbox{ with } \mathbf{x}_1^* = \alpha \}.$
\end{enumerate}

\noindent \textit{\large Phase 2: Parameter Matching}

\begin{enumerate}
\item Set $\hat{\mathcal{U}}:=\bigcup_{i\in[m]}\hat{\mathcal{M}}_{i}.$ Choose $\delta>0$ small enough and define the relation $\sim$ on $\mathcal{\hat{U}},$ where $\hat{\alpha} \sim \hat{\beta}$ if and only if $||\hat{\alpha} - \hat{\beta}||_{2} < 2\delta.$ If $\sim$ is not an equivalence relation then choose a smaller $\delta$ and repeat.
\item Construct the quotient set $\hat{\mathcal{U}}\backslash\sim.$ Replace all elements of each $\hat{\mathcal{M}_{i}},$ $V(\hat{H}_{i})$ with their equivalence class.
\item For each $j\in\mathcal{S},$ set \\$\hat{\psi}(j)=\left(\bigcap_{g\in\mathcal{G}_{j}}\mathcal{\hat{M}}_{g}\right)\cap\left(\bigcap_{b\in\mathcal{B}_{j}}\hat{\mathcal{M}}_{b}^{c}\right).$
\item For each $j\in\mathcal{S}^{c},$

\begin{enumerate}
\item Set $i^{*} = i\in[m]$ such that $j\in p_{i^{*}}.$
\item Construct $\mathbf{v}^{sub}\equiv(\psi(k):k\in p_{i^{*}}, k\ne j\}.$
\item Set $\psi(j)=\hat{\alpha}$ such that $(\mathbf{v}^{sub},\hat{\alpha})\in V(\hat{H}_{i}).$

\end{enumerate}
\end{enumerate}
\end{algorithm}

\section{Universality}
\label{sec:universality}
The crucial step in the method described above was to come up with, for each $i \in [m],$ a well behaved polynomial system, i.e., one that satisfies the properties of Lemma~\ref{lem:EPSWellBeh}, based solely on the samples of $Y_i.$ Once that was done, the ability to match parameters to the component random variables was only a consequence of the $1-$identifiability condition of the matrix $A.$ This suggests that it may be possible to develop similar schemes even in settings different to the ones assumed in Section~\ref{sec:model}. In fact, functions other than the MGF could also serve as blueprints for constructing the polynomial system. We discuss in brief few of these ideas in this section. Note that we are making a preference for polynomial systems for the sole reason that there exist computationally efficient algorithms, see for example \cite{Sommese05, Li97, Morgan89, Verschelde99}, to determine all its roots.

Consider the case where $\forall j \in [N],$ the distribution of $X_j$ is a finite mixture model given by
\begin{equation}
\label{eq:genMixDist}
F_j(u) = \sum_{k = 1}^{d_j + 1}w_{jk}\phi_{jk}(u),
\end{equation}
where $d_j \in \mathbb{N},$ $w_{j1}, \ldots, w_{j(d_l + 1)}$ denote mixing weights, i.e., $w_{jk} \geq 0$ and $\sum_{k = 1}^{d_k + 1}w_{jk} = 1,$ and $\{\phi_{jk}(u)\}$ are some basis functions, say Gaussian, uniform, etc. The MGF is thus given by
\begin{equation}
M_{X_j}(t) = \sum_{k = 1}^{d_j + 1}w_{jk}\int_{u = 0}^{\infty}\exp(-ut)d\phi_{jk}(u).
\end{equation}
Note now that if the basis functions $\{\phi_{jk}\}$ are completely known then the MGF of each $Y_i$ will again be a polynomial in the mixing weights, $\{w_{jk}\},$ similar in spirit to the relation of \eqref{eq:MGFRel}. As a result, the complete recipe of Section~\ref{sec:GenNT} can again be attempted to estimate the weight vectors of the random variables $X_j$ using only the IID samples of each $Y_i.$

In relation to \eqref{eq:Dist} or \eqref{eq:genMixDist}, observe next that $\forall n \in \mathbb{N},$ the $n^{th}$ moment of each $X_j$ is given by
\begin{equation}
\mathbb{E}(X_j^n) = \sum_{k = 1}^{d_j + 1}w_{jk}\int_{u = 0}^{\infty}u^n d\phi_{jk}(u).
\end{equation}
Hence, the $n^{th}$ moment of $Y_i$ is again a polynomial in the unknown weights. This suggests that, instead of the MGF, one could use the estimates of the moments of $Y_i$ to come up with an alternative polynomial system and consequently solve for the distribution of each $X_j.$

Moving away from the models of \eqref{eq:Dist} and \eqref{eq:genMixDist}, suppose that for each $j \in [N],$ $X_j \sim \exp(m_j).$ Assume that each mean $m_j < \infty$ and that  $m_{j_{1}} \neq m_{j_{2}}$ when $j_{1}\neq j_{2}.$ We claim that the basic idea of our method can be used here to estimate $m_1, \ldots, m_N$ and hence the complete distribution of each $X_j$ using only the samples of $Y_i.$ As the steps are quite similar when either i) we know $M_{Y_i}(t)$ for each $i \in [m]$ and every valid $t$ and ii) we have access only to the IID samples $\{Y_{il}\}_{l \geq 1}$ for each $i \in [m]$, we take up only the first case.

Fix $i \in [m]$ and let $p_i := \{j \in [N]: a_{ij} = 1\}.$ To simplify notations, let us relabel the random variables $\{X_j: j \in p_i\}$ that add up to give $Y_i$ as $X_1, \ldots, X_{N_i},$ where $N_i = |p_i|.$ Observe that the MGF of $Y_i,$ after inversion, satisfies
\begin{equation}
  \prod_{j=1}^{N_{i}} (1 + t m_{j}) = 1/M_{Y_{i}}(t).
  \label{eq:momGenRel}
\end{equation}
Using \eqref{eq:momGenRel}, we can then define the canonical polynomial
\begin{equation}
  f(\mathbf{x};t):=\prod_{j=1}^{N_{i}}(1+tx_{j})-c(t), \label{eq:canPol}
\end{equation}
where $\mathbf{x} \equiv (x_{1},\ldots,x_{N_{i}})$ and $c(t) = 1/M_{Y_{i}}(t).$ Now choose an arbitrary set $\tau = \{t_{1},\ldots,t_{N_{i}}\} \subset \mathbb{R}_{++}$ consisting of distinct numbers and define
\begin{equation}
  F_{\tau}( \mathbf{x}) \equiv (f_{1}(\mathbf{x}), \ldots,  f_{N_{i}}(\mathbf{x})) = 0, \label{eq:IntmdPolySys}
\end{equation}
where $f_{k}(\mathbf{x})=f(\mathbf{x};t_{k}).$ We emphasize that this system is square of size $N_{i},$ depends on the choice of subset $\tau$ and each polynomial $f_{k}$ is symmetric with respect to the variables $x_{1,}\ldots,x_{N_{i}}.$ In fact, if we let $\mathbf{c}_{\tau}\equiv(c(t_{1}),\ldots,c(t_{N_{i}}))$ and $\mathcal{E} (\mathbf{x}) \equiv (e_{1}(\mathbf{x}), \ldots, e_{N_{i}} (\mathbf{x})),$ where $e_{k}(\mathbf{x})=\sum_{1\leq j_{1}<j_{2}<\ldots<j_{k}\leq N_{i}}x_{j_{1}}\cdots x_{j_{k}}$ denotes the $k^{th}$ elementary symmetric polynomial in the $N_{i}$ variables $x_{1,}\ldots,x_{N_{i}},$ we can rewrite \eqref{eq:IntmdPolySys} as
\begin{equation}
  T_{\tau} \mathcal{E}(\mathbf{x}) - (\mathbf{c}_{\tau} - \mathbf{1}) = 0. \label{eq:firstSymPolSys}
\end{equation}
Here $T_{\tau}$ denotes a Vandermonde matrix of order $N_{i}$ in $t_{1},\ldots,t_{N_{i}}$ with $(T_{\tau})_{jk}=t_{j}^{k}.$ Its determinant, given by $\det(T_{\tau}) = \left(\prod_{j=1}^{n} t_{j} \right) \prod_{j>i}(t_{j}-t_{i}),$ is clearly non-zero. Premultiplying \eqref{eq:firstSymPolSys} by $T_{\tau}^{-1},$ we obtain
\begin{equation}
  \mathcal{E}(\mathbf{x}) - T_{\tau}^{-1}(\mathbf{c}_{\tau} -   \mathbf{1}) = 0. \label{eq:secSymPolSys}
\end{equation}
Observe now that the vector $\mathbf{m} \equiv (m_{1},\ldots,m_{N_{i}})$ is a natural root of \eqref{eq:IntmdPolySys} and hence of \eqref{eq:secSymPolSys}. Hence $T_{\tau}^{-1}(\mathbf{c}_{\tau} - \mathbf{1}) = \mathcal{E}(\mathbf{m}).$  The EPS for this case can thus be written as
\begin{equation}
  H_{i}(\mathbf{x}) \equiv \mathcal{E}(\mathbf{x}) - \mathcal{E}(\mathbf{m}) = 0. \label{eq:charPolSys}
\end{equation}

We next discuss the properties of this EPS, or more specifically, its solution set. For this, let $V(H_{i}):=\{\mathbf{x} \in \mathbb{C}^{N_i} : H_{i}(\mathbf{x}) = 0\}.$
\begin{lemma}
	\label{lem:EPSSolSet}
	$V(H_{i})=\pi_{\mathbf{m}}:= \{\sigma(\mathbf{m}): \sigma \in S_{N_{i}}\}.$
\end{lemma}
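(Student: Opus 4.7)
The statement says that the solution set of the EPS consists exactly of the $N_i!$ permutations of $\mathbf{m}$. My plan is to prove the two inclusions separately, and the main tool is the fundamental identity between the elementary symmetric polynomials of a tuple and its characteristic polynomial (Vieta's formulas).

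The inclusion $\pi_{\mathbf{m}} \subseteq V(H_i)$ is immediate: each coordinate $e_k$ of the map $\mathcal{E}$ is a symmetric polynomial in the $N_i$ variables $x_1, \ldots, x_{N_i}$, so $\mathcal{E}(\sigma(\mathbf{m})) = \mathcal{E}(\mathbf{m})$ for every $\sigma \in S_{N_i}$, which gives $H_i(\sigma(\mathbf{m})) = 0$.

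For the reverse inclusion, suppose $\mathbf{x}^{*} \in V(H_i)$, so that $e_k(\mathbf{x}^{*}) = e_k(\mathbf{m})$ for every $k \in [N_i]$. By Vieta's formulas,
\begin{equation*}
\prod_{j=1}^{N_i}(z - x_j^{*}) = \sum_{k=0}^{N_i}(-1)^k e_k(\mathbf{x}^{*}) z^{N_i - k} = \sum_{k=0}^{N_i}(-1)^k e_k(\mathbf{m}) z^{N_i - k} = \prod_{j=1}^{N_i}(z - m_j)
\end{equation*}
as polynomials in $z$ (here $e_0 \equiv 1$). Two monic polynomials over $\mathbb{C}$ that agree as polynomials have the same roots with multiplicity, so the multisets $\{x_1^{*}, \ldots, x_{N_i}^{*}\}$ and $\{m_1, \ldots, m_{N_i}\}$ coincide. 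Hence there is a $\sigma \in S_{N_i}$ with $\mathbf{x}^{*} = \sigma(\mathbf{m})$, giving $V(H_i) \subseteq \pi_{\mathbf{m}}$.

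Combining both inclusions yields $V(H_i) = \pi_{\mathbf{m}}$. I do not anticipate a real obstacle here: the argument is essentially the fact that elementary symmetric polynomials form a complete set of invariants for the action of $S_{N_i}$ on $\mathbb{C}^{N_i}$ by coordinate permutation. The hypothesis that the $m_j$ are pairwise distinct, stated earlier in the excerpt, is not needed for the set-theoretic equality itself but does ensure that $|\pi_{\mathbf{m}}| = N_i!$ so that the solution set has precisely $N_i!$ elements, matching the generic count expected from B\'ezout-type considerations for a square polynomial system of total degree $N_i!$.
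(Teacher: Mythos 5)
Your proof is correct and takes essentially the same approach as the paper: the paper's own proof is the one-line remark that the claim ``follows directly from \eqref{eq:charPolSys},'' and your Vieta's-formulas argument is precisely the detail being elided there, namely that equality of all elementary symmetric polynomials forces equality of the coefficient multisets. Your closing observation that distinctness of the $m_j$ is needed only for $|\pi_{\mathbf{m}}| = N_i!$ (and for the non-singularity used in the next lemma), not for the set equality itself, is also accurate.
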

\begin{proof} This follows directly from \eqref{eq:charPolSys}.
\end{proof}

\begin{lemma}
	\label{lem:EPSRegSol}
	For every $\mathbf{x}^{*}\in V(H_{i}),$ $\det(\dot{\mathcal{E}}(\mathbf{x^{*}})) \neq 0.$ \label{prop:regSol}
\end{lemma}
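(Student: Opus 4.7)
By Lemma~\ref{lem:EPSSolSet}, every $\mathbf{x}^{*} \in V(H_i)$ has the form $\mathbf{x}^{*} = \sigma(\mathbf{m})$ for some $\sigma \in S_{N_i}$, where $\mathbf{m} = (m_1, \ldots, m_{N_i})$ has \emph{pairwise distinct} entries by hypothesis. The plan is therefore to identify $\det\dot{\mathcal{E}}(\mathbf{x})$ with (up to sign) the Vandermonde product $V(\mathbf{x}) := \prod_{1 \leq i < j \leq N_i}(x_j - x_i)$, and then observe that permutations of $\mathbf{m}$ only change the sign of $V(\mathbf{m})$, keeping it nonzero.

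First I would compute the entries of $\dot{\mathcal{E}}(\mathbf{x})$. Differentiating the generating-function identity $\prod_{j=1}^{N_i}(t+x_j) = \sum_{k=0}^{N_i} e_k(\mathbf{x}) t^{N_i-k}$ with respect to $x_j$ and reading off coefficients of $t^{N_i-k}$ yields
\[
\frac{\partial e_k}{\partial x_j}(\mathbf{x}) = e_{k-1}(\mathbf{x}_{\hat{j}}),
\]
where $\mathbf{x}_{\hat{j}}$ is $\mathbf{x}$ with its $j^{th}$ coordinate deleted. Thus $\dot{\mathcal{E}}(\mathbf{x})$ is the $N_i \times N_i$ matrix whose $(k,j)$ entry is the $(k-1)$-th elementary symmetric polynomial in the $N_i - 1$ variables $\{x_i : i \neq j\}$.

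Next I would establish $\det\dot{\mathcal{E}}(\mathbf{x}) = \pm V(\mathbf{x})$. Swapping two $x$-coordinates swaps the corresponding two columns of $\dot{\mathcal{E}}(\mathbf{x})$, so $\det\dot{\mathcal{E}}(\mathbf{x})$ is an alternating polynomial in $x_1, \ldots, x_{N_i}$; hence it is divisible by $V(\mathbf{x})$. A degree count — column $j$ contributes entries of degrees $0, 1, \ldots, N_i-1$, so $\det\dot{\mathcal{E}}$ has total degree at most $\binom{N_i}{2} = \deg V$ — forces the quotient to be a constant, which may be pinned to $\pm 1$ by evaluation at any convenient point (for example by comparing the coefficient of $x_2 x_3^2 \cdots x_{N_i}^{N_i-1}$ on both sides, which comes uniquely from the diagonal term of the Jacobian).

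Combining the two steps, since the components of $\mathbf{m}$ are pairwise distinct, $V(\mathbf{m}) \neq 0$, and therefore $V(\sigma(\mathbf{m})) = \mathrm{sgn}(\sigma)\, V(\mathbf{m}) \neq 0$ for every $\sigma \in S_{N_i}$. Hence $\det\dot{\mathcal{E}}(\mathbf{x}^{*}) \neq 0$ for every $\mathbf{x}^{*} \in V(H_i)$, as required. The only substantive step is the Vandermonde identity for the Jacobian of the elementary symmetric map; this is classical and the alternating-polynomial-plus-degree-count argument above handles it cleanly, so I do not anticipate any real obstacle.
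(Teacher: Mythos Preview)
Your proposal is correct and follows essentially the same approach as the paper: the paper's proof simply invokes Lemma~\ref{lem:EPSSolSet} together with the stated fact that $\det(\dot{\mathcal{E}}(\mathbf{x}))=\prod_{1\leq j<k\leq N_{i}}(x_{j}-x_{k})$, which is precisely the Vandermonde identity you establish. The only difference is that you actually supply the classical alternating-polynomial-plus-degree-count argument for that identity, whereas the paper treats it as known; in that sense your write-up is more detailed than the original, but the underlying logic is identical.
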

\begin{proof} This follows from Lemma (\ref{lem:EPSSolSet}) and the fact that $\det(\dot{\mathcal{E}}(\mathbf{x}))=\prod_{1\leq j<k\leq N_{i}}(x_{j}-x_{k}).$
\end{proof}

Because of Lemma~\ref{lem:EPSSolSet}, it suffices to work with only the first components of the roots. Hence we define
\begin{equation}
  \mathcal{M}_{i}:=\{\alpha^{*}\in\mathbb{C}: \exists \mathbf{x}^{*} \in V(H_i) \mbox{ with } \mathbf{x}^*_{1} = \alpha \},
\end{equation}
which in this case is equivalent to the set $\{m_1, \ldots, m_{N_i} \}.$ Reverting back to global notations, note that
\begin{equation}
  \mathcal{M}_{i}=\{m_j : j \in p_i\}. \label{eq:finPathSolSet}
\end{equation}

Since $i$ was arbitrary, we can repeat the above procedure to obtain the collection of solution sets $\{\mathcal{M}_i: i \in [m]\}.$ Arguing as in Theorem~\ref{thm:idlCasParMch}, it is now follows that if $A$ is $1-$identifiable then the rule
\begin{equation}
\psi(j) = \bigcap_{g \in \mathcal{G}_j}\mathcal{M}_g \cap \bigcap_{b \in \mathcal{B}_j} \mathcal{M}_b^{c},
\end{equation}
where $\mathcal{G}_j = \{i \in [m]: j \in p_i\}$ and $\mathcal{B}_j = [m]\backslash\mathcal{G}_j,$ satisfies the relation $\psi(j) = m_j.$  That is, having obtained the sets $\{\mathcal{M}_i : i \in [m]\},$ one can use $\psi$ to match the parameters to the corresponding random variables.

This clearly demonstrates that even if a transformation of the MGF is a polynomial in the parameters to be estimated, our method may be applicable.

\addtolength{\textheight}{-3.5cm}   


\section{Experimental Results}
\label{sec:ExptResults}

\begin{figure}[ht!]
\begin{center}
\subfigure[Tree topology]
{
	\label{fig:tree}
  \includegraphics[width=0.4\columnwidth]{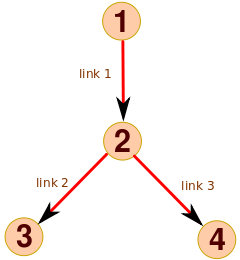}
}
\subfigure[General topology]
{
	\label{fig:general}
  \includegraphics[width=0.4\columnwidth]{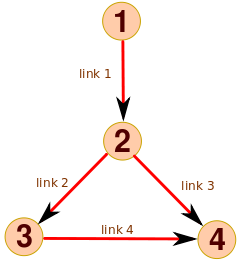}
}
\end{center} \label{fig:simNet1}
\caption{Network topologies in simulation experiments.}
\end{figure}
In order to verify performance, we conducted matlab based simulations using a network with i) tree topology, Fig.~\ref{fig:tree} and ii) general topology, Fig.~\ref{fig:general}. 

\subsection{Network with Tree Topology}
\label{subsec:treeTopology}

We work here with the network of Figure ~\ref{fig:tree}.
\begin{experiment}
\label{exp:expt1}
Node $1$ is the source node, while nodes $3$ and $4$ act as sink. The packet delay across link $j,$ denoted $X_j,$ is a hyperexponential random variable---a special case of the \eqref{eq:Dist}. The count of exponential stages in each link distribution equals three. That is, $d = 2.$ The corresponding exponential stage parameters $\lambda_1, \lambda_2$ and $\lambda_3$ are taken to be $5, 3$ and $1$ respectively. For each link, the weight associated with each exponential stage is given in first three columns of Table~\ref{tab:expt1}.

\begin{table}[h!]
\caption{\label{tab:expt1}Actual and Estimated weights for Expt.~\ref{exp:expt1}}
\centering
\begin{tabular}{| c | c  c  c | c  c  c |}
\hline
Link & $w_{j1}$ & $w_{j2}$ & $w_{j3}$ & $\hat{w}_{j1}$ & $\hat{w}_{j2}$ & $\hat{w}_{j3}$\\
\hline 
1 & 0.17  & 0.80 & 0.03 &  0.15  & 0.82 & 0.02\\
 
2 & 0.13  & 0.47 & 0.40 &  0.15  & 0.46 & 0.39\\

3 & 0.80  & 0.15 & 0.05 &  0.79  & 0.15 & 0.06\\
\hline 
\end{tabular}
\vspace{0.1cm}
\end{table}

Let $p_1$ be the path connecting the nodes $1, 2$ and $3.$ Similarly, let $p_2$ be the path connecting the nodes $1, 2$ and $4.$ Let $Y_1$ and $Y_2$ denote respectively the end-to-end delay across each of these paths. If we let $Y \equiv (Y_1, Y_2)$ and $X \equiv (X_1, X_2, X_3)$ then it follows that $Y = AX,$ where 
\[
A = \left (
\begin{array}{ccc}
1 & 1 & 0\\

1 & 0 & 1\\
\end{array}
 \right).
\]
We are now in the framework of Section~\ref{sec:model}.

We first focus on path $p_1.$ Observe that its EPS is given by the map of \eqref{eq:egEPS}. We collect now a million samples of its end-to-end delay. Choosing an arbitrary set $\tau = \{1.9857, 2.3782, 0.3581, 8.8619\},$ we run the first phase of Algorithm~\ref{alg:EstLinkDis} to obtain $\hat{\mathcal{M}}_1.$ This set along with its ideal counterpart is given in Table~\ref{tab:expt11}.

\begin{table}[h!]
\caption{\label{tab:expt11} Solution set of the EPS for path $p_1$ in Expt.~\ref{exp:expt1}}
\centering
\begin{tabular}{| c | c  c |}
\hline
sol-ID & $\mathcal{M}_1$ & $\hat{\mathcal{M}}_1$\\
\hline 
1 & (0.1300, 0.4700)  & (0.1542, 0.4558)\\
 
2 & (0.1700, 0.8000)  & (0.1292, 0.8356)\\

3 & (3.8304, -2.8410)  & (3.8525, -2.8646)\\

4 & (0.1933, 0.7768)  & (0.2260, 0.7394)\\

5 & (0.1143, 0.4840)  & (0.0882, 0.5152)\\

6 & (0.0058, -0.1323)  & (0.0052, -0.1330)\\
\hline 
\end{tabular}
\vspace{0.1cm}
\end{table}

Similarly, by probing the path $p_2$ with another million samples and with $\tau = \{0.0842, 0.0870, 0.0305, 0.0344\},$ we determine $\hat{\mathcal{M}}_2.$ The sets $\mathcal{M}_2$ and $\hat{\mathcal{M}}_2$ are given in Table~\ref{tab:expt12}.

\begin{table}[h!]
\caption{\label{tab:expt12} Solution set of the EPS for path $p_2$ in Expt.~\ref{exp:expt1}}
\centering
\begin{tabular}{| c | c  c |}
\hline
sol-ID & $\mathcal{M}_2$ & $\hat{\mathcal{M}}_2$\\
\hline 
1 & (0.8000, 0.1500)  & (0.7933, 0.1459)\\
 
2 & (0.1660, 0.7775)  & (0.1720, 0.8095)\\

3 & (5.5623, -4.5638)  & (5.5573, -4.5584)\\

4 & (0.1700, 0.8000)  & (0.1645, 0.7669)\\

5 & (0.8191, 0.1543)  & (0.8296, 0.1540)\\

6 & (0.0245, -0.0263)  & (0.0246, -0.0259)\\
\hline 
\end{tabular}
\vspace{0.1cm}
\end{table}

To match the weight vectors to corresponding links, firstly observe that the minimum distance between $\hat{\mathcal{M}}_1$ and $\hat{\mathcal{M}}_2$ is 0.0502. Based on this, we choose $\delta = 0.03$ and run the second phase of Algorithm~\ref{alg:EstLinkDis}. The obtained results, after rounding to two significant digits, are given in the second half of Table~\ref{tab:expt1}. Note that the weights obtained for the first link are determined by taking a simple average of the solutions obtained from the two different paths. The norm of the error vector equals $0.0443.$
\end{experiment}

\begin{experiment}
\label{exp:expt2}
Keeping other things unchanged as in the setup of experiment~\ref{exp:expt1}, we consider here four exponential stages in the distribution of each $X_j.$ The exponential stage parameters $\lambda_1, \lambda_2, \lambda_3$ and $\lambda_4$ equal $5, 4, 0.005$ and $1$ respectively. The corresponding weights are given in Table~\ref{tab:expt2}. But observe that the weights of the third stage is negligible for all three links. We can thus ignore its presence. The results obtained after running Algorithm~\ref{alg:EstLinkDis} are given in the second half of Table~\ref{tab:expt2}. The norm of the error vector equals $0.1843.$

\begin{table}[h!]
\caption{\label{tab:expt2}Actual and Estimated weights for Expt.~\ref{exp:expt2}}
\begin{center}
\begin{tabular}{|c|c c c c|c c c c|}
\hline 
Link & $w_{j1}$ & $w_{j2}$ & $w_{j3}$ & $w_{j4}$ & $\hat{w}_{j1}$ & $\hat{w}_{j2}$ & $\hat{w}_{j3}$ & $\hat{w}_{j4}$\tabularnewline
\hline 
1 & 0.71  & 0.20 & 0.0010 & 0.08 & 0.77  & 0.20 & 0 & 0.03\\

2 & 0.41  & 0.17 & 0.0015 & 0.41 & 0.38  & 0.18 & 0 & 0.44\\

3 & 0.15  & 0.80 & 0.0002 & 0.04 & 0.12  & 0.70 & 0 & 0.18\\
\hline 
\end{tabular}
\end{center}
\vspace{0.1cm}
\end{table}

\end{experiment}

\subsection{Network with General Topology}
\label{subsec:genTopology}

We deal here with the network of Figure~\ref{fig:general}.
\begin{experiment}
\label{exp:expt3}
Nodes $1$ and $2$ act as source while nodes $3$ and $4$ act as sink. We consider here three paths. Path $p_1$ connects the nodes $1, 2$ and $3,$ path $p_2$ connects the nodes $1, 2$ and $4,$ while path $p_3$ connects the nodes $2, 3$ and $4.$ Let $X \equiv (X_1, X_2, X_3, X_4),$ where $X_j$ is the packet delay across link $j.$ Also, let $Y \equiv (Y_1, Y_2, Y_3),$ where $Y_i$ denotes the end-to-end delay across path $p_i.$ They are related by $Y = AX,$ where 
\[
A = \left(
\begin{array}{cccc}
1 & 1 & 0 & 0\\
1 & 0 & 1 & 0\\
0 & 1 & 0 & 1\\
\end{array}
\right).
\]

As in Experiment~\ref{exp:expt1}, the random variable $X_j$ is hyperexponentially distributed with values of $d, \lambda_1, \lambda_2, \lambda_3$ kept exactly the same. By choosing again a million probe packets for each path, we run Algorithm~\ref{alg:EstLinkDis}. The actual and estimated weights are shown in Table~\ref{tab:expt3}.

\begin{table}[h!]
\caption{\label{tab:expt3}Actual and Estimated weights for Expt.~\ref{exp:expt3}}
\begin{center}
\begin{tabular}{|c|c c c| c c c|}
\hline 
Link & $w_{j1}$ & $w_{j2}$ & $w_{j3}$ & $\hat{w}_{j1}$ & $\hat{w}_{j2}$ & $\hat{w}_{j3}$\tabularnewline
\hline 
1 & 0.34  & 0.26 & 0.40 & 0.34  & 0.24 & 0.42\\

2 & 0.46  & 0.49 & 0.05 & 0.45  & 0.50 & 0.05\\

3 & 0.12  & 0.65 & 0.23 & 0.11  & 0.68 & 0.21\\

4 & 0.71  & 0.19 & 0.10 & 0.69  & 0.18 & 0.13\\
\hline 
\end{tabular}
\end{center}
\vspace{0.1cm}
\end{table}
\end{experiment}

\section{Discussion}
\label{sec:ConclFutW}
This paper took advantage of the properties of polynomial systems to develop a novel algorithm for the GNT problem. For any arbitrary $1-$identifiable matrix $A,$ it demonstrated successfully how to accurately estimate the distribution of the random vector $X,$ with mutually independent components, using only IID samples of the components of the random vector $Y = AX.$ Translating to network terminology, this means that one can now address the tomography problem even for networks with arbitrary topologies using only pure unicast probe packet measurements. The fact that we need only the IID samples of the components of $Y$ shows that the processes to acquire these samples across different paths can be asynchronous. Another nice feature of this approach is that it can estimate the unknown link level performance parameters even when no prior information is available about the same.

\section*{Acknowledgement}
I would like to thank my advisors, Prof. V.~Borkar and Prof. D.~Manjunath, for guiding me right through this work. I would also like to thank C.~Wampler, A.~Sommese, R.~Gandhi, S.~Gurjar and M.~Gopalkrishnan for helping me understand several results from algebraic geometry.

\bibliographystyle{IEEE}
\bibliography{tomography}

\end{document}